\newtheorem{thm}{Theorem}[section]
\newtheorem{cor}[thm]{Corollary}
\newtheorem{lem}[thm]{Lemma}
\newtheorem{defn}[thm]{Definition}
\newtheorem{preremark}[thm]{Remark}
\newenvironment{remark}%
  {\begin{preremark}\upshape}{\end{preremark}}
\newtheorem{preproblem}[thm]{Problem}
  {\begin{preproblem}\upshape}{\end{preproblem}}
\newtheorem{preexample}[thm]{Example}
\newenvironment{example}%
  {\begin{preexample}\upshape}{\end{preexample}}
\numberwithin{equation}{section}
\newtheorem{fact}[thm]{Fact}
\numberwithin{equation}{section}
\newcommand{\del}{\bigtriangleup}
\newcommand{\inv}{^{-1}}
\newcommand{\ten}{\otimes}
\newcommand{\second}{\prime \prime }
\newcommand{\comments}[1]{}
\DeclareMathOperator{\Res}{Res}
\DeclareMathOperator{\EQ}{EQ}
\begin{document}

\title[ Quadratic differential operators
]{Quadratic differential operators, Bicharacters  and $\bullet$ Products}
\author{Iana I. Anguelova}
\author{Maarten J. Bergvelt}

\address{Anguelova:
  Department of Mathematics,  College of Charleston,
Charle\-ston SC 29424 }
\email{anguelovai@cofc.edu}
\address{Bergvelt: Department of Mathematics\\ University of Illinois\\
  Urbana-Champaign\\ Illinois 61801} \email{bergv@uiuc.edu}

\subjclass[2000]{16T05, 81R50, 17B69}

\begin{abstract}
  For a commutative cocommutative Hopf algebra we study the
  relationship between a certain linear map defined via a bicharacter,
  an exponential of a quadratic differential operator and a $\bullet$ product
  obtained via twisting by a bicharacter. This new
  relationship between $\bullet$ products and exponentials of quadratic
  differential operators was inspired by studying the exponential of a
  particular quadratic differential operator introduced in \cite{MR996026} and
  used in the theory of twisted modules of lattice vertex algebras.
\end{abstract}

\maketitle
\tableofcontents

\section{Introduction }
\label{sec:intro}

\subsection{Motivation From Vertex Algebras}
\label{sec:Motivation From Vertex Algebras}

We would like to start with a few words about our motivation. This
discussion uses some basic facts about vertex algebras and their
twisted modules. The reader who is not interested in vertex algebras
can skip to the next subsection \ref{sec:Overview of the Paper}. In
the main part of the paper we will not use vertex algebras at all.

In the theory of vertex algebras (see for example \cite{MR1651389},
\cite{MR2023933} for background and more details) there is a notion
of state-field correspondence: if $V$ is a vertex algebra and $a\in V$
is a state, then the theory produces a field $a(z)=Y(a,z)\colon V\to
V((z))$. A vertex algebra carries infinitely many products for the
states: if $a,b\in V$ then we define $a_{(n)}b=\Res_{z}(z^{n} a(z)b)$.

Of particular importance is the $(-1)$ product: On the one hand it
often happens that $V$ is the span of the $(-1)$ products of a number
of generating states $v^{[i]}$, i.e., span of products
$v^{[i_{1}]}_{(-1)}v^{ [i_{2}]}_{(-1)}\dots_{(-1)}v^{[i_{n}]}$. On the
other hand the $(-1)$-product of states corresponds to the normal
ordered product of the fields of these states, i.e., the field
corresponding to a state $a_{(-1)}b$ is the normal ordered product
$\colon a(z)b(z) \colon$ of the fields $a(z)$ and $b(z)$. (See
Appendix \ref{sec:NormalOrdProd} for an introduction to normal ordered
products.)

Vertex algebras have (twisted) modules $M$ (see \cite{MR996026},
\cite{MR1272580}, \cite{MR1393116}, \cite{MR1372724}, \cite{MR1978341},
\cite{MR2172171}); in this situation we still have a state-field
correspondence $Y_M$, now from states in $V$ to (twisted) fields
$a_{M}(z)$ on $M$.  In many cases one can still define the normal
ordered product of twisted fields. But, for twisted modules, it is no
longer true that in general the normal ordered product of the twisted
fields corresponds to the $(-1)$-product of states in $V$.  And vice
versa, it is not true that the field on a twisted module corresponding
to a state $a_{(-1)}b$ is the normal ordered product of the fields
$a_{M}(z)$ and $b_{M}(z)$.

To rectify this I. Frenkel, Lepowsky and Meurman introduced in
\cite{MR996026} a very clever and unexpected modification of the
construction, in the case of a lattice vertex algebra.  They
introduced a specific quadratic differential operator $\Delta_z$
acting on the states, such that the field of $a_{(-1)}b$ on a twisted
module $M$ is given by
\[
Y_{M}(a_{(-1)}b, z)=  e^{\Delta_{z}} \colon  a_M(z)b_{M}(z)\colon.
\]
(Here the meaning of the right-hand side is that one decomposes
$e^{\Delta_{z}}(a_{(-1)}b)$ in a sum of (-1) products of generating
states and takes the sum of the normal ordered product of the
corresponding fields on $M$.)

This leads to the following question: does the normal ordered product
of fields on a twisted module correspond to a new, modified, product
of states, just as the normal ordered product on the vertex algebra
itself corresponds to the $(-1)$ product? Or equivalently: is there a
modified product of states, call it $a\bullet b$, on $V$, such that
the state field correspondence $Y_{M}\colon a\mapsto a_{M}(z)$ is a
homomorphism from $V$ with the $\bullet $ product to the space of
fields on $M$ with normal ordered product as operation? The answer
turns out to be yes: the operator $e^{\Delta_{z}}$ indeed leads to a
new product $\bullet$ on $V$ satisfying this property.

In this paper we abstract this situation, and show that in fairly
general context we have a similar relation between exponentials of
quadratic differential operators and $\bullet $ products.

\subsection{Overview of the Paper}
\label{sec:Overview of the Paper}

The definition of the $\bullet$ product was  motivated by the theory
of twisted modules of vertex algebras. We abstract
and simplify the situation, and consider the following problem:\\
  Given a commutative algebra $(M,\cdot)$ and a new commutative
  product $\bullet$ on $M$ can we find a map $\EQ\colon M\to
  M$ such that
  \[
  \EQ(a\cdot b)=\EQ(a)\bullet \EQ(b), \quad a,b\in M?
\]

Of course, in general such a map $\EQ$ will not exist. Therefore we study this problem under the assumption of an extra structure on $M$: we will assume that $M$ is a commutative and
cocommutative Hopf algebra with a bicharacter $r$.

This allows us to define, using the bicharacter $r$, a general $\bullet $
product\footnote{This is a construction similar to the one first
  introduced in the context of vertex algebras by Borcherds in
  \cite{MR1865087}, but also used in a more general context in the
  theory of Hopf algebras and quantum groups.}. Next we define a map
$\EQ$, again depending on $r$, which acts as a homomorphism between
the ordinary product and the $\bullet $ product (see Section
\ref{sec:bicharacters}).

Further, we prove that we can actually find a logarithm of the map
$\EQ$ --- a quadratic operator $\Delta_z =\mathbf{Q}$ acting on the
commutative cocommutative Hopf algebra, such that we have
$\EQ=e^{\mathbf{Q}}$. (In the paper we will use the notation
$\mathbf{Q}$ instead of $\Delta_z$ in order to avoid confusion with
the notation for the coproduct in the Hopf algebra.)\footnote{The
  $\bullet $ product in this sense resembles the Moyal star product,
  see e.g. \cite{MR1301840}. The Moyal star product is of course
  noncommutative, but it too can be defined via an exponential of a
  bi-differential operator.}

In Section \ref{sec:PolyAlg} we first prove the existence of the
logarithm for the case where $M$ is a polynomial algebra.  In the next
section \ref{sec:NonTrivGrpLike} we continue with the general case of
a commutative cocommutative Hopf algebra (i.e., when grouplike
elements are present). With some restrictions on the bicharacter $r$ we
again establish the relation between the $\bullet $ product and the
maps $\EQ$ and $e^{\mathbf{Q}}$.  The result is summarized in Theorem
\ref{Thm:GroupLike_EQ_PRop}. In section \ref{sec:The
  Frenkel-Lepowsky-Meurman Example} we show that the operator
$e^{\Delta_{z}}$ of \cite{MR996026} is a special case of what we call
$e^{Q}=\EQ$.

In Appendix \ref{sec:NormalOrdProd} we make the connection between our construction for a
commutative and cocommutative Hopf algebra and the case of Heisenberg
algebra studied by I. Frenkel, Lepowsky and Meurman.  We discuss the
definition of normal ordered product for (twisted) Heisenberg fields,
and the relation with a $\bullet$ product of the space of states.
In  Appendix \ref{sec:alternativeproof} we discuss an alternative operator description of the coproduct and some of its uses.

\section{Bicharacters and Deformations of Commutative Algebras}
\label{sec:bicharacters}

For a Hopf algebra $M$ we will denote the coproduct and the counit by
$\del$ and $\eta$, the antipode by $S$. If $a$ is an element of a Hopf
algebra we will use Sweedler's notation and write $\del (a)=\sum \
a^{\prime}\ten a^{\second}$. We often will omit the summation sign.
(For more details on Hopf algebras see for example \cite{MR1321145}).

Recall that a Hopf algebra $M$ is cocommutative if for any $m\in M$ we
have $\sum \ m^{\prime}\ten m^{\second}=\sum \ m^{\second}\ten
m^{\prime}$. A primitive element $m\in M$ is such that we have $\del
(m)=m\ten 1+1\ten m$, $\eta (m)=0$, and $S(m)=-m$.  A grouplike
element $g\in M$ is such that $\del(g)=g\ten g, \ \eta (g)=1$,
$S(g)=g^{-1}$.
\begin{defn}[\textbf{Bicharacter}(\cite{MR1865087})]\label{defn:bich}
  Let $M$ be a commutative and cocommutative Hopf algebra over
  $\mathbb{C}$ and $A$ a commutative $\mathbb{C}$ algebra. An
  $A$-valued bicharacter on $M$ is a linear map $r:M\ten M\to A$, such
  that for any $a,b,c\in M$
  \begin{align*}
    r(1\ten a)&= \eta (a) = r(a\ten 1),\\
    r(ab\ten c)&= \sum r(a\ten c^{\prime})r(b\ten c^{\second}),\\
    r(a\ten bc)&= \sum r(a^{\prime}\ten b)r(a^{\second}\ten c).
  \end{align*}
\end{defn}

If $r,s$ are bicharacters as above, we can define their product
$r\circ s$ by
\begin{equation}
  \label{eq:1}
r\circ s (a\ten b)= r(a^{\prime}\ten b^{\prime})s(a^{\second}\ten b^{\second}).
\end{equation}
We refer to $\circ$ as the convolution product of bicharacters.  It is
easy to see that the product of two bicharacters is a bicharacter. By
cocommutativity of $M$ and commutativity of $A$ the convolution
product is commutative.  The formula
\begin{equation}
  \label{eq:2}
\epsilon(a\ten b)= \eta (a) \eta(b)
\end{equation}
defines the identity bicharacter $\epsilon$.  If $S$ is the antipode
of $M$ then the formula
\begin{equation}
   \label{eq:3}
   r^{-1}(a\ten b)=r(S(a)\ten b)
\end{equation}
defines the inverse bicharacter with respect to convolution. The
bicharacters on $M$ therefore form an Abelian group with respect to
convolution.

The group of bicharacters carries an involution, $r\mapsto r^{t}$
where
\[
r^{t}(a\ten b)=r(b\ten a).
\]
A bicharacter $s$ is called symmetric if it is invariant under the
involution: $s=s^{t}$.

One point of bicharacters on $M$ is that they allow us to deform the
multiplication of $M$. So let $M$ be a commutative and cocommutative
Hopf algebra over $\mathbb{C}$ as above, $A$ a $\mathbb{C}$-algebra
and let $r:M\ten M\to A$ be an $A$-valued bicharacter on $M$. We
define a new product on $M_{A}=M\ten A$ by
\begin{equation}
m_{r}(a\ten b) = \sum a^{\prime} b^{\prime}r(a^{\second}\ten b^{\second}),
\end{equation}
for any  $a, b \in M$,  where $ab$ is the initial algebra product of
$a$ and $b$ in $M$.

\begin{lem}[\textbf{Twisting by a bicharacter} \cite{MR1865087}]
\label{lem:twistingbyBich}
The product $m_{r}$ is associative and unital (with same unit $1_{M}$).
If the bicharacter is symmetric then the new
multiplication on $M_{A}$ is commutative.
\end{lem}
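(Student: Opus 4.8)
The plan is to verify associativity, unitality, and (under symmetry) commutativity directly from the bicharacter axioms in Definition \ref{defn:bich}, together with cocommutativity of $M$ and commutativity of $A$. Throughout I will use Sweedler's notation with suppressed summation, writing $\del(a)=a^{\prime}\ten a^{\second}$, and I will freely use coassociativity to write $\del^{(2)}(a)=a^{\prime}\ten a^{\second}\ten a^{\prime\prime\prime}$ for iterated coproducts, since the bracketing is immaterial.

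First I would prove associativity, which I expect to be the main obstacle, as it is the only axiom requiring nontrivial bookkeeping. The idea is to expand both $m_{r}(m_{r}(a\ten b)\ten c)$ and $m_{r}(a\ten m_{r}(b\ten c))$ and show they agree. Computing the left side, one applies $m_{r}$ to the pair $(a^{\prime}b^{\prime}r(a^{\second}\ten b^{\second}))\ten c$; since $r(a^{\second}\ten b^{\second})\in A$ is a scalar factor, it pulls out, and one must apply the coproduct to the product $a^{\prime}b^{\prime}$. Here the key input is that $\del$ is an algebra homomorphism, so $\del(a^{\prime}b^{\prime})=a^{\prime}b^{\prime}\ten a^{\prime\prime}b^{\prime\prime}$ in suitable Sweedler indexing, and then the multiplicativity axiom $r(xy\ten z)=r(x\ten z^{\prime})r(y\ten z^{\second})$ is used to split $r(a^{\second}b^{\second}\ten c^{\second})$. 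Carrying out the analogous expansion on the right side using the other multiplicativity axiom $r(a\ten bc)=r(a^{\prime}\ten b)r(a^{\second}\ten c)$, both sides reduce to the same expression, a sum over the appropriate Sweedler components of $a^{\prime}b^{\prime}c^{\prime}$ times a product of three $r$-factors $r(a^{\cdots}\ten b^{\cdots})\,r(a^{\cdots}\ten c^{\cdots})\,r(b^{\cdots}\ten c^{\cdots})$. The bookkeeping of matching Sweedler indices is the delicate part; coassociativity guarantees the groupings coincide.

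For unitality I would use the first bicharacter axiom $r(1\ten a)=\eta(a)=r(a\ten 1)$ together with the fact that $1$ is grouplike, so $\del(1)=1\ten 1$ and $\eta$ acts as the counit. Then $m_{r}(1\ten b)=\sum 1\cdot b^{\prime}\,r(1\ten b^{\second})=\sum b^{\prime}\eta(b^{\second})=b$ by the counit axiom, and symmetrically $m_{r}(a\ten 1)=a$; thus $1_{M}$ remains the unit for $m_{r}$.

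Finally, for commutativity under the hypothesis $r=r^{t}$, I would compare $m_{r}(a\ten b)=\sum a^{\prime}b^{\prime}r(a^{\second}\ten b^{\second})$ with $m_{r}(b\ten a)=\sum b^{\prime}a^{\prime}r(b^{\second}\ten a^{\second})$. Since $M$ is commutative the products $a^{\prime}b^{\prime}$ and $b^{\prime}a^{\prime}$ agree, and the symmetry $r(a^{\second}\ten b^{\second})=r^{t}(a^{\second}\ten b^{\second})=r(b^{\second}\ten a^{\second})$ matches the two scalar factors; cocommutativity ensures the Sweedler pairings $a^{\prime}\ten a^{\second}$ and $b^{\prime}\ten b^{\second}$ may be used interchangeably with their swaps, so the two expressions coincide. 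This completes the three parts of the statement.
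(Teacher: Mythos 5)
Your proposal is correct in substance, but note first that the paper itself offers no proof of this lemma: it is stated with a citation to Borcherds \cite{MR1865087} and never proved in the text, so your direct verification is not paralleling an argument in the paper but supplying the standard one. Your unitality paragraph is exactly right, and so is commutativity --- in fact there you do not even need cocommutativity: with $M$ commutative and $r=r^{t}$ the sums $\sum a^{\prime}b^{\prime}r(a^{\second}\ten b^{\second})$ and $\sum b^{\prime}a^{\prime}r(b^{\second}\ten a^{\second})$ agree term by term, so that invocation is superfluous (harmlessly so).

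The one point to sharpen is the attribution at the end of your associativity paragraph. Carrying out the expansions you describe, the left side reduces to
\[
a^{(1)}b^{(1)}c^{(1)}\,r(a^{(2)}\ten c^{(2)})\,r(b^{(2)}\ten c^{(3)})\,r(a^{(3)}\ten b^{(3)}),
\]
while the right side reduces to
\[
a^{(1)}b^{(1)}c^{(1)}\,r(a^{(2)}\ten b^{(2)})\,r(a^{(3)}\ten c^{(2)})\,r(b^{(3)}\ten c^{(3)}).
\]
Coassociativity alone does \emph{not} identify these: the legs of $a$ (and of $b$) are paired with $b$ and $c$ in opposite orders on the two sides. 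What closes the gap is cocommutativity of $M$, which lets you permute the Sweedler legs $a^{(2)}\leftrightarrow a^{(3)}$ and $b^{(2)}\leftrightarrow b^{(3)}$ (the paper records precisely this invariance of $\Delta^{n-1}$ under permutations just before the proof of Lemma \ref{lem:EQisHomom}), together with commutativity of $A$ to reorder the scalar $r$-factors. Since you listed cocommutativity among your ingredients at the outset, this is an imprecision rather than a fatal gap, but the sentence ``coassociativity guarantees the groupings coincide'' is false as stated: on a non-cocommutative Hopf algebra the twisted product is associative only when $r$ satisfies a genuine $2$-cocycle condition, and it is exactly cocommutativity that makes every bicharacter a cocycle in the present setting.
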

In general the new multiplication on $M_{A}$ is not commutative.
\begin{defn}
\label{defn:symmetrization}
\begin{bf}(Symmetrization of a bicharacter and $\bullet $ product)\end{bf}
Starting with the bicharacter $r$ (which might not be
symmetric), define a new, symmetric, bicharacter $s$ by $s=r\circ r^{t}$,
or explicitly
\[
s(a\ten b)=r(a^{\prime}\ten b^{\prime})r(b^{\prime\prime}\ten a^{\prime\prime}),
\]
and define a (commutative) multiplication on $M$ by the twisting with
this symmetrized bicharacter:
\[
a\bullet b= a^{\prime}b^{\prime}s(a^{\prime\prime}\ten b^{\prime\prime}).
\]
In general we will call a $\bullet$ product on $M$ any twisting
$m_{s}$ of the standard product by a symmetric bicharacter.
\end{defn}
We write $\bullet_s$ when we want to emphasize the dependence on the
symmetric bicharacter $s$.

Note that if $r$ happens to be symmetric, the new bicharacter $s$
obtained from $r$ is \emph{not} identical to $r$: for instance if
$g, \bar{g}$ are grouplike, then $s(g\ten
\bar{g})=r(g\ten \bar{g})^{2}$, and if $x,y$ are
primitive $s(x\ten y)=2r(x\ten y)$.

Now we can ask what the relation is between the two multiplications on
$M_{A}$: the original multiplication, and the $\bullet $
multiplication obtained by twisting with $s$. To answer this question
we begin with the following definition:
\begin{defn}
\label{defn:TheMapEQ}
\begin{bf}(The linear map $\EQ_r$)\end{bf}
Let $M$ be a commutative and cocommutative Hopf algebra over
$\mathbb{C}$, $A$ a commutative $\mathbb{C}$-algebra and let $r:M\ten M\to A$ be
an $A$-valued bicharacter on $M$. Define a linear map
\begin{equation}
\EQ_r\colon M \to M_{A},\quad m\mapsto r(m^{\prime}\ten
m^{\prime\prime})m^{\prime\prime\prime}.\label{eq:DefEQ}
\end{equation}
Here we write $\Delta^{2}(m)=\sum m^{\prime}\ten m^{\prime\prime}\ten
m^{\prime\prime\prime}$.
\end{defn}
If the bicharacter $r$ is clear from the context, we will just write $\EQ$ for $\EQ_r$.
\begin{example}
For any bicharacter $r$ if $x$ is primitive we have
\begin{equation*}
r(x\ten 1)=r(1\ten x)=0,
\end{equation*}
thus  $\EQ_r(x)=x$ for any bicharacter $r$.
If $g$ is grouplike, then
$\EQ_r(g)=gr(g\ten g)$.
\end{example}
\begin{lem}
\label{lem:EQisHomom}
  If $a,b\in M$, where $M$ is a commutative and cocommutative Hopf algebra over
$\mathbb{C}$ as above, then for any bicharacter $r$ with
symmetrization $s$ we have
\[
\EQ_{r}(ab)=\EQ_{r}(a)\bullet_{s} \EQ_{r}(a).
\]
\end{lem}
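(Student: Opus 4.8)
The plan is to expand both sides completely in Sweedler notation for the iterated coproduct and to check that, after using cocommutativity to relabel the summation indices, the two resulting sums agree term by term. Throughout I write $\del^{n}$ for the $n$-fold iterated coproduct; by coassociativity this is unambiguous, and by cocommutativity the associated Sweedler sum is invariant under any permutation of its tensor factors, these symmetries acting independently on the $a$-factors and on the $b$-factors. Both $\EQ_r(ab)$ and $\EQ_r(a)\bullet_s\EQ_r(b)$ are sums of a scalar in $A$ times an element of $M$, so it suffices to match these two sums. A bookkeeping of the coproduct splittings below shows that five tensor factors (i.e.\ $\del^{4}$) of each of $a$ and $b$ are needed.

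For the left-hand side I would first use that $\del$, hence $\del^{2}$, is an algebra homomorphism to write $\del^{2}(ab)=\del^{2}(a)\del^{2}(b)$, which gives
\[
\EQ_r(ab)=\sum r(a'b'\ten a''b'')\,a'''b'''.
\]
I then expand the single bicharacter $r(a'b'\ten a''b'')$ using the two multiplicativity axioms of Definition \ref{defn:bich}: applying multiplicativity in the first slot and then in the second, and refining the coproducts of $a',a''$ and of $b',b''$ accordingly, splits it into a product of four factors of the shape $r(a\ten a)\,r(b\ten b)\,r(a\ten b)\,r(b\ten a)$. After relabelling via $\del^{4}$ this produces an explicit sum whose $M$-part is $a_{(5)}b_{(5)}$.

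For the right-hand side I start from $\EQ_r(a)=\sum r(a'\ten a'')\,a'''$, compute its coproduct (extended $A$-linearly) using coassociativity, and substitute into the definition $u\bullet_s v=\sum u'v'\,s(u''\ten v'')$. Unfolding $s=r\circ r^{t}$, that is $s(x\ten y)=r(x'\ten y')r(y''\ten x'')$, again yields a product of four bicharacter factors—the pure factor $r(a\ten a)$ coming from $\EQ_r(a)$, the pure factor $r(b\ten b)$ from $\EQ_r(b)$, and the two mixed factors from $s$—times an $M$-part $a_{(3)}b_{(3)}$, expressed in the same $\del^{4}$ notation.

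The final step is to identify the two sums. Because $M$ is cocommutative, I would exhibit one permutation of the five $a$-factors and one of the five $b$-factors carrying the left-hand expression term by term onto the right-hand one; the two pure factors and the two mixed factors then correspond exactly. I expect the main obstacle to be purely organizational: keeping the nested Sweedler indices straight through the repeated use of the multiplicativity axioms, and, at the matching stage, recognizing that the two mixed factors $r(a\ten b)r(b\ten a)$ produced on the left assemble into precisely the symmetrized bicharacter $s$ that enters the $\bullet_s$ product on the right. The special cases of primitive and grouplike generators recorded in the preceding Example provide a convenient check that the index bookkeeping has been done correctly.
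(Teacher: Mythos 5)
Your proposal is correct and follows essentially the same route as the paper's own proof: expand $\EQ_r(ab)$ via $\del^{2}(ab)=\del^{2}(a)\del^{2}(b)$ and the two multiplicativity axioms into four bicharacter factors against $a^{(5)}b^{(5)}$, expand $\EQ_r(a)\bullet_s\EQ_r(b)$ by unfolding $s=r\circ r^{t}$ into the same four shapes, and match the two sums by permuting the Sweedler factors of $a$ and of $b$ independently using cocommutativity. The only difference is presentational—you describe the index bookkeeping rather than writing it out—and you correctly read the lemma's displayed identity as $\EQ_r(ab)=\EQ_r(a)\bullet_s\EQ_r(b)$, fixing the obvious typo in the statement.
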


We first recall some facts that will be used in the proof
below. Coassociativity requires
\begin{align*}
  (\del)^2 (a)&=a^{\prime}\ten a^{\prime\prime}\ten a^{\prime\prime\prime}=\\
  =(Id \ten \del) \circ \del (a)&=a^{\prime}\ten
  {(a^{\prime\prime})}^{\prime} \ten
  {(a^{\prime\prime})}^{\prime\prime}= \\ =(\del \ten Id) \circ \del
  (a) &={(a^{\prime})}^{\prime} \ten {(a^{\prime})}^{\prime\prime}\ten
  a^{\prime\prime},
\end{align*}
for any element $a$ of the Hopf algebra $M$, and similarly we can
uniquely define
\begin{equation}
  \del ^{n-1}(a)=\sum \ a^{(1)}\ten a^{(2)}\ten \dots \ten a^{(n)}
\end{equation}
(extended Sweedler notation). By cocommutativity of $M$ the factors
$a^{(i)}$ of $\Delta^{(n-1)}(a)$ are invariant under all permutations
of $n$.
\begin{proof}
  \begin{align*}
    \EQ(ab)& =r((ab)^{\prime}\ten
    (ab)^{\prime\prime})(ab)^{\prime\prime\prime}
    =r(a^{\prime}b^{\prime}\ten  a^{\prime\prime}b^{\prime\prime})
    a^{\prime\prime\prime}b^{\prime\prime\prime}=\\
    & =r(a^{\prime}\ten {(a^{\prime\prime}b^{\prime\prime})}^{\prime})
    r(b^{\prime}\ten {(a^{\prime\prime}b^{\prime\prime})}^{\prime\prime})a^{\prime\prime\prime}b^{\prime\prime\prime}=\\
    & =r(a^{(1)}\ten {(a^{(2)})}^{\prime}{(b^{(2)})}^{\prime})
    r(b^{(1)}\ten {(a^{(2)})}^{\prime\prime}{(b^{(2)})}^{\prime\prime})a^{(3)}b^{(3)}=\\
    & =r(a^{(1)}\ten a^{(2)}b^{(2)})r(b^{(1)}\ten a^{(3)}b^{(3)})a^{(4)}b^{(4)}=\\
    & =r({(a^{(1)})}^{\prime}\ten
    a^{(2)})r({(a^{(1)})}^{\prime\prime}\ten
    b^{(2)})r({(b^{(1)})}^{\prime}\ten a^{(3)})\cdot\\
&\qquad\qquad \cdot
r({(b^{(1)})}^{\prime\prime}\ten b^{(3)})a^{(4)}b^{(4)}=\\
    & =r(a^{(1)}\ten a^{(3)})r(a^{(2)}\ten b^{(3)})r(b^{(1)}\ten
    a^{(4)})r(b^{(2)}\ten b^{(4)})a^{(5)}b^{(5)}
  \end{align*}
On the other hand,
  \begin{align*}
    \EQ(a)\bullet \EQ(b)& =\big( r(a^{\prime}\ten
    a^{\prime\prime})a^{\prime\prime\prime}\big)
    \bullet \big( r(b^{\prime}\ten b^{\prime\prime})b^{\prime\prime\prime}\big)=\\
    & =r(a^{\prime}\ten a^{\prime\prime})r(b^{\prime}\ten
    b^{\prime\prime})(a^{\prime\prime\prime})
    \bullet (b^{\prime\prime\prime})=\\
    & =r(a^{\prime}\ten a^{\prime\prime})r(b^{\prime}\ten
    b^{\prime\prime}){(a^{\prime\prime\prime})}^{\prime}{(b^{\prime\prime\prime})}^{\prime}
    s({(a^{\prime\prime\prime})}^{\prime\prime}\ten {(b^{\prime\prime\prime})}^{\prime\prime})=\\
    & =r(a^{(1)}\ten a^{(2)})r(b^{(1)}\ten b^{(2)})a^{(3)}b^{(3)}
    s(a^{(4)}\ten b^{(4)})=\\
    & =r(a^{(1)}\ten a^{(2)})r(b^{(1)}\ten b^{(2)})a^{(3)}b^{(3)}
    r(a^{(4)}\ten b^{(4)})r(b^{(5)}\ten a^{(5)})
 \end{align*}
The equality then follows
using cocommutativity  of the coproduct:
\begin{align*}
  \EQ(ab)& =r(a^{(1)}\ten a^{(3)})r(a^{(2)}\ten b^{(3)})r(b^{(1)}\ten a^{(4)})r(b^{(2)}\ten b^{(4)})a^{(5)}b^{(5)}=\\
  & =r(a^{(1)}\ten a^{(2)})r(a^{(3)}\ten b^{(3)})
  r(b^{(4)}\ten a^{(4)})r(b^{(1)}\ten b^{(2)})a^{(5)}b^{(5)}=\\
  & =r(a^{(1)}\ten a^{(2)})r(b^{(1)}\ten b^{(2)}) r(a^{(4)}\ten
  b^{(4)})r(b^{(5)}\ten a^{(5)})a^{(3)}b^{(3)}=\\& =\EQ(a)\bullet \EQ(b).
  \end{align*}
\end{proof}
The conclusion is that the map $\EQ$ is a homomorphism from $(M_{A}, \cdot)$ to
$(M_{A},\bullet )$.

Notice that the algebra structure on $(M, \cdot)$ (and by extension of
$(M_{A}, \cdot)$) can be considered to be in fact the twisting of
$M$ by the identity bicharacter $\epsilon$ (see \eqref{eq:2}).

Thus $(M, \cdot)$ is in fact $(M, \bullet _{\epsilon})$, and the map
$\EQ_r$ is a homomorphism from $(M_{A}, \bullet _{\epsilon})$ to
$(M_{A},\bullet _{s})$. This leads to the question of the relation
between the different multiplication structures on $M$ given by
symmetric bicharacters $s$. If $s_{1}$ and $s_{2}$ are the
symmetrization of bicharacters $r_{1}$ and $r_{2}$ then there is a
homomorphism $EQ$ that intertwines them. Indeed, $\EQ_{-}$ is a
homomorphism from the Abelian group of bicharacters to
linear maps on $M$:
\begin{lem} For all bicharacters $r_{1},r_{2}$ we have
  \[
  \EQ_{r}=\EQ_{r_{1}}\circ\EQ_{r_{2}},\quad \text{if $r=r_{1}\circ
    r_{2}$.}
  \]
\end{lem}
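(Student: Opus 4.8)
The plan is to verify the identity by evaluating both sides on an arbitrary element $m\in M$ and reducing each to a single expression built from the fourth iterated coproduct $\del^4(m)=\sum m^{(1)}\ten m^{(2)}\ten m^{(3)}\ten m^{(4)}\ten m^{(5)}$; the two expressions will then be seen to coincide by cocommutativity. All the work is bookkeeping of Sweedler indices, so the only genuinely necessary structural input is the permutation invariance of the factors of an iterated coproduct.

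First I would expand the left-hand side. By the definition of $\EQ$ and of the convolution product \eqref{eq:1},
\[
\EQ_{r_1\circ r_2}(m)=(r_1\circ r_2)(m^{\prime}\ten m^{\second})\,m^{\prime\prime\prime}
=r_1({(m^{\prime})}^{\prime}\ten {(m^{\second})}^{\prime})\,r_2({(m^{\prime})}^{\second}\ten {(m^{\second})}^{\second})\,m^{\prime\prime\prime}.
\]
Splitting the first two tensor factors of $\del^2(m)$ once more and invoking coassociativity to recollect the result as $\del^4(m)$, this becomes
\[
\EQ_{r_1\circ r_2}(m)=r_1(m^{(1)}\ten m^{(3)})\,r_2(m^{(2)}\ten m^{(4)})\,m^{(5)}.
\]

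Next I would compute the composite. Since $\EQ_{r_1}$ is $A$-linear (extended to $M_A$ in the obvious way), the scalar factor produced by $\EQ_{r_2}$ passes through it, so that
\[
\EQ_{r_1}(\EQ_{r_2}(m))=r_2(m^{\prime}\ten m^{\second})\,\EQ_{r_1}(m^{\prime\prime\prime}).
\]
Applying the definition of $\EQ_{r_1}$ to the last factor, splitting $m^{\prime\prime\prime}$ into three pieces, and again using coassociativity to recollect everything as $\del^4(m)$, this yields
\[
\EQ_{r_1}(\EQ_{r_2}(m))=r_2(m^{(1)}\ten m^{(2)})\,r_1(m^{(3)}\ten m^{(4)})\,m^{(5)}.
\]

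Finally, the two displayed expressions differ only in which factors of $\del^4(m)$ the bicharacters $r_1$ and $r_2$ are evaluated on: in the first, $r_1$ sits on factors $(1,3)$ and $r_2$ on $(2,4)$, while in the second $r_2$ sits on $(1,2)$ and $r_1$ on $(3,4)$, with the fifth factor left over in both. Because $M$ is cocommutative, the factors $m^{(i)}$ of $\del^4(m)$ are invariant under all permutations of their indices, as recorded in the cocommutativity remark preceding the proof of Lemma~\ref{lem:EQisHomom}; applying the permutation $(1\,3\,4\,2)$ (fixing the fifth index) sends one pattern to the other, identifying the two sums and completing the proof. I do not expect any real obstacle here beyond keeping the iterated-coproduct indices straight, the permutation-invariance from cocommutativity being exactly the ingredient that closes the argument.
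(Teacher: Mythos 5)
Your proposal is correct and follows essentially the same route as the paper's own proof: both sides are expanded via coassociativity into expressions in the five factors of $\del^{4}(m)$, with $r_{1}$ and $r_{2}$ sitting on different index pairs, and the two are then identified using the permutation invariance of the factors of the iterated coproduct guaranteed by cocommutativity. Your explicit identification of the permutation $(1\,3\,4\,2)$ fixing the fifth index, and your remark that $\EQ_{r_{1}}$ must be extended $A$-linearly to $M_{A}$ so the scalar produced by $\EQ_{r_{2}}$ passes through, merely make explicit what the paper's computation does implicitly.
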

\begin{proof}
\begin{align*}
  &\EQ_{r_{1}}\circ \EQ_{r_{2}}(a)=\EQ_{r_{1}}(r_{2}(a^{(1)}\ten
  a^{(2)})a^{(3)})= \\
  & =r_{1}({a^{(3)}}^{\prime}\ten {a^{(3)}}^{\prime\prime}){a^{(3)}}^{\prime\prime\prime} r_{2}(a^{(1)}\ten
  a^{(2)})=r_{1}(a^{(1)}\ten
    a^{(2)})r_{2}(a^{(3)}\ten a^{(4)})a^{(5)}=\\
    & =r_{1}\circ r_{2}(a^{\prime}\ten
    a^{\prime\prime})a^{\prime\prime\prime} =r(a^{\prime}\ten
    a^{\prime\prime})a^{\prime\prime\prime}=\EQ_{r}(a),
\end{align*}
by coassociativity and cocommutativity.
\end{proof}
\begin{cor}
\label{cor:orbit}
Each $\EQ_{r}$ is invertible, with inverse $\EQ_{r\inv}$ and if
$r_{i}$ are bicharacters with symmetrization $s_{i}$, $i=1,2$ then
\[
\EQ_{r_{2}}\circ \EQ_{r_{1}\inv}\colon (M,\bullet_{s_{1}})\to (M,\bullet_{s_{2}})
\]
gives the homomorphism from the multiplication $\bullet_{s_{1}}$ to
$\bullet_{s_{2}}$.
\end{cor}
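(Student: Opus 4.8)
The plan is to deduce everything from the multiplicativity property $\EQ_{r_{1}\circ r_{2}}=\EQ_{r_{1}}\circ\EQ_{r_{2}}$ established in the preceding lemma, together with the fact that the bicharacters form an Abelian group under convolution. First I would establish invertibility. Since $r\circ r\inv=\epsilon$ and $r\inv\circ r=\epsilon$ in the group of bicharacters, the preceding lemma gives $\EQ_{r}\circ\EQ_{r\inv}=\EQ_{\epsilon}=\EQ_{r\inv}\circ\EQ_{r}$, so it suffices to check that $\EQ_{\epsilon}=\mathrm{Id}$. Unwinding the definition, $\EQ_{\epsilon}(m)=\epsilon(m^{\prime}\ten m^{\prime\prime})m^{\prime\prime\prime}=\eta(m^{\prime})\eta(m^{\prime\prime})m^{\prime\prime\prime}$, and two applications of the counit axiom collapse this to $m$. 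Hence $\EQ_{r}$ is invertible with inverse $\EQ_{r\inv}$.

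For the second claim I would combine Lemma \ref{lem:EQisHomom} with this invertibility. Read through the observation that $(M,\cdot)=(M,\bullet_{\epsilon})$, that lemma says precisely that $\EQ_{r_{1}}$ is an algebra homomorphism from $(M_{A},\cdot)$ to $(M_{A},\bullet_{s_{1}})$; being bijective by the first part, it is in fact an algebra isomorphism. The inverse of a bijective algebra homomorphism is again a homomorphism, so $\EQ_{r_{1}\inv}=\EQ_{r_{1}}\inv$ is an isomorphism $(M_{A},\bullet_{s_{1}})\to(M_{A},\cdot)$. Composing it with the homomorphism $\EQ_{r_{2}}\colon(M_{A},\cdot)\to(M_{A},\bullet_{s_{2}})$ supplied by Lemma \ref{lem:EQisHomom} yields the asserted homomorphism $\EQ_{r_{2}}\circ\EQ_{r_{1}\inv}\colon(M,\bullet_{s_{1}})\to(M,\bullet_{s_{2}})$.

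I expect no serious obstacle: the corollary is essentially a formal consequence of the group structure on bicharacters and of Lemma \ref{lem:EQisHomom}. The one computation requiring care is $\EQ_{\epsilon}=\mathrm{Id}$, where I must apply the counit axiom to the correct tensor factors of $\del^{2}(m)$; a clean route is to write $\del^{2}=(\mathrm{id}\ten\del)\circ\del$ and apply the counit identity $(\eta\ten\mathrm{id})\circ\del=\mathrm{id}$ twice. I would also spell out the general fact that a bijective algebra homomorphism has a homomorphic inverse, since this is what upgrades the one-directional statement of Lemma \ref{lem:EQisHomom} into the isomorphism needed for the intertwining claim.
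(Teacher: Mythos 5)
Your proposal is correct and follows exactly the route the paper intends: the corollary is stated without proof as an immediate consequence of the convolution-homomorphism lemma $\EQ_{r_{1}\circ r_{2}}=\EQ_{r_{1}}\circ\EQ_{r_{2}}$ together with Lemma \ref{lem:EQisHomom}, and you supply precisely the details left implicit (the computation $\EQ_{\epsilon}=\mathrm{Id}$ via the counit axiom, and the standard fact that a bijective algebra homomorphism has a homomorphic inverse). No gaps.
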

More generally one can ask about the relation between symmetric
bicharacters that are not necessarily symmetrizations.

 Indeed, we have the
following generalization of Lemma \ref{lem:EQisHomom}.
\begin{thm}
\label{thm:EQinterchangeTh}
Let $r$ be a bicharacter defined on a commutative and cocommutative
Hopf algebra $M$ and $s_{1}$ any
symmetric bicharacter. We have
\begin{equation}
\label{eq:EQinterchangeEq}
\EQ_{r}(a \bullet _{s_1} b) = \EQ_{r}(a) \bullet _{s_{2}} \EQ_{r}(b),
\end{equation}
for  $s_{2}=s\circ s_{1}$, where $s$ is the  symmetrization  of the bicharacter $r$.
\end{thm}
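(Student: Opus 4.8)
The plan is to reduce the statement to the already-established Lemma \ref{lem:EQisHomom} (which is exactly the special case $s_1=\epsilon$, since then $a\bullet_\epsilon b = ab$ and $s_2 = s\circ\epsilon = s$) by isolating the extra twisting by $s_1$ on both sides. Two auxiliary identities carry the argument, and both are short Sweedler computations straight from the definitions.

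First I would handle the left-hand side. Since $a\bullet_{s_1}b = a'b'\,s_1(a''\ten b'')$ and $\EQ_r$ is $A$-linear, the scalar $s_1(a''\ten b'')$ pulls out, giving $\EQ_r(a\bullet_{s_1}b) = s_1(a''\ten b'')\,\EQ_r(a'b')$. Applying Lemma \ref{lem:EQisHomom} to the product $a'b'$ then yields $\EQ_r(a\bullet_{s_1}b) = s_1(a''\ten b'')\bigl(\EQ_r(a')\bullet_s \EQ_r(b')\bigr)$, where the further coproducts of $a'$ and $b'$ are understood via coassociativity.

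For the right-hand side I would first record two facts. (A) The map $\EQ_r$ is compatible with the coproduct: from $\del(\EQ_r(a)) = r(a'\ten a'')\,\del(a''') = r(a^{(1)}\ten a^{(2)})\,a^{(3)}\ten a^{(4)}$ and coassociativity one gets $\del\circ\EQ_r = (\EQ_r\ten \mathrm{id})\circ\del$, that is $\del(\EQ_r(a)) = \EQ_r(a')\ten a''$. (B) For symmetric bicharacters the convolution $s_2 = s\circ s_1$ factors the twisted product: directly from the definitions of $\circ$ and of $\bullet$ one finds $u\bullet_{s_2}v = \bigl(u'\bullet_s v'\bigr)\,s_1(u''\ten v'')$ for all $u,v$. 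Applying (B) with $u=\EQ_r(a)$ and $v=\EQ_r(b)$, and then using (A) to rewrite $\EQ_r(a)'\ten\EQ_r(a)''$ as $\EQ_r(a')\ten a''$, the right-hand side becomes $\bigl(\EQ_r(a')\bullet_s \EQ_r(b')\bigr)\,s_1(a''\ten b'')$, which is exactly the expression obtained for the left-hand side. This proves \eqref{eq:EQinterchangeEq}.

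The main obstacle is the bookkeeping in the two auxiliary identities rather than any deep point: one must verify that the pass-through coproduct leg carrying $s_1$ lands on matching factors on both sides, which is precisely where an index slip would occur. I expect (A) to be the delicate step, since it is the one that lets the single $s_1$ factor be stripped off uniformly; it rests only on coassociativity, and cocommutativity is not needed for this form. An alternative, more computational route would bypass (A) and (B) and instead expand both sides into the common six-factor form $r(a^{(1)}\ten a^{(2)})r(b^{(1)}\ten b^{(2)})a^{(3)}b^{(3)}r(a^{(4)}\ten b^{(4)})r(b^{(5)}\ten a^{(5)})s_1(a^{(6)}\ten b^{(6)})$, mirroring the proof of Lemma \ref{lem:EQisHomom} with an extra $s_1$-factor carried along and the remaining factors matched by cocommutativity; this works but is messier than the two-identity reduction above.
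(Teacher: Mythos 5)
Your proof is correct, but it takes a genuinely different route from the paper's: the paper's entire proof of Theorem \ref{thm:EQinterchangeTh} is the single remark that the argument is ``very similar to the proof of Lemma \ref{lem:EQisHomom}'', i.e., it intends the direct Sweedler expansion of both sides into a common multi-leg form with the extra $s_1$-factor carried along --- precisely the ``messier alternative'' you sketch in your closing paragraph. Your primary argument instead factors the theorem through two structural identities: (A) the intertwining $\del\circ\EQ_r=(\EQ_r\ten\mathrm{id})\circ\del$, and (B) the factorization $u\bullet_{s\circ s_1}v=(u'\bullet_s v')\,s_1(u''\ten v'')$ of a convolution-twisted product; then you strip $s_1$ off the left-hand side by $A$-linearity, apply Lemma \ref{lem:EQisHomom} to $a'b'$, and match the two sides via (A) and (B). I checked both identities: each is a correct short consequence of coassociativity alone (as you note, cocommutativity enters only through the invoked Lemma \ref{lem:EQisHomom}), and both extend $A$-(bi)linearly to $M_A$, which is genuinely needed since $a\bullet_{s_1}b$ already lies in $M_A$ --- a point you rightly address by invoking $A$-linearity of $\EQ_r$. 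What your route buys: it isolates exactly where each hypothesis is used, makes the theorem literally a corollary of the lemma rather than a re-run of its computation, and produces reusable facts --- (A) in particular says $\EQ_r$ is a morphism of comodules, which explains uniformly how any single twisting factor passes through $\EQ_r$, and immediately gives the whole orbit picture of Corollary \ref{cor:orbit} as well. What the paper's implicit route buys is self-containedness, at the cost of the six-leg Sweedler bookkeeping you describe. One cosmetic remark: in (B) symmetry of $s$ and $s_1$ is not actually needed --- the factorization holds for arbitrary bicharacters --- though symmetry is of course what makes $s_2=s\circ s_1$ a legitimate $\bullet$ product in the statement.
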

\begin{proof}
The proof is very similar to the proof of the Lemma \ref{lem:EQisHomom} above.
\end{proof}
Observe that the maps $\EQ_{r}$ gives an action of the Abelian group of
bicharacters on the space of $\bullet$ products.
The theorem implies that the space of $\bullet$ products (parametrized
by symmetric bicharacters) decomposes into orbits under the action of
the Abelian group of all bicharacters.

This leads to the question of the orbit structure of the space of
$\bullet$ products on $M$.  In the next section we show that if $M$ is
generated by primitive elements $s$ is always a symmetrization, and consequently  there is only one orbit, and all $\bullet$ products are related
to the trivial product $\cdot=\bullet_{\epsilon}$. In Section
\ref{sec:NonTrivGrpLike} we discuss the case where there are also nontrivial
grouplike elements in $M$. Then there the orbit structure can be more
complicated.

\section{The Polynomial Algebra and Quadratic Differential Operators}
\label{sec:PolyAlg}

Consider a cocommutative Hopf algebra $M$ with no grouplike elements
except the unit 1.  Any such Hopf algebra over a field of
characteristic 0 is the universal enveloping algebra of the Lie
algebra of its primitive elements (for proof see for example
\cite{MR0174052}). Suppose  that $M$ is also commutative: then $M$
is the universal enveloping algebra of the \textbf{abelian} Lie
algebra of its primitive elements. Thus any such Hopf algebra over
$\mathbb{C}$ is nothing else but the polynomial algebra over a basis
for the primitive elements. The result described below works for any
polynomial algebra $M$, but for the examples we are interested in
(motivated from the theory of twisted modules of vertex algebras) we
will only look at the case when the algebra $M$ is generated by
countably many primitive generators.

Thus, consider the polynomial algebra
\[
V_0=\mathbb{C}[x_{1},x_{2},\dots]
\]
in variables $x_{i},i\ge1 , \ i \in \mathbb{N}$. ($V_0$
is the universal enveloping algebra of the Abelian Lie algebra
$\oplus_{i\ge1}\mathbb{C}x_{i}$). As such, $V_0$ is a commutative and
cocommutative Hopf algebra with primitive generators $x_{i}$.

First of all we can immediately answer the question whether a symmetric
bicharacter gives a multiplication $\bullet_{s}$ related to the
standard multiplication $\cdot$ of $V_{0}$ by an isomorphism $\EQ_{r}$ for
some bicharacter $r$.

This is certainly the case if $s$ is a symmetrization, as we argued
above, but in our present case ($V_{0}$ generated by primitives) we
see that we can define, given $s$, a bicharacter $r$ by defining $r(x_{i}\ten x_{j})=\frac{1}{2}s(x_{i}\ten x_{j})$ on the generators.
The symmetrization of the bicharacter $r$ is then
\[
r\circ r^{t}(x_{i}\ten x_{j})=
r(x_{i}^{\prime}\ten x_{j}^{\prime})r(x_{j}^{\prime\prime}\ten
x_{i}^{\prime\prime})=r(x_{i}\ten x_{j})+r(x_{j}\ten
x_{i})=s(x_{i}\ten x_{j}).
\]
\begin{thm}
 \label{thm:newinterch}
  Let $V_{0}=\mathbb{C}[x_{1},x_{2},\dots]$. Then for any symmetric
  bicharacters $s_1, s_2$   exists an isomorphism $\EQ\colon (V\ten A, \bullet_{s_1})\to
  (V\ten A,\bullet_{s_2})$  intertwining
   the  $\bullet$ products. In particular, exists an isomorphism $\EQ\colon (V\ten A, \cdot)\to
  (V\ten A,\bullet_{s})$ relating the standard product to any $\bullet_s$ product.
\end{thm}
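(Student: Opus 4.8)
The plan is to produce the intertwining isomorphism as a single map $\EQ_r$ for a suitably chosen bicharacter $r$, reducing everything to Theorem~\ref{thm:EQinterchangeTh} and Corollary~\ref{cor:orbit}. Given symmetric bicharacters $s_1,s_2$, I would set $s=s_2\circ s_1\inv$. Since the involution $r\mapsto r^{t}$ is an automorphism of the Abelian group of bicharacters (it preserves $\circ$, as $(r\circ s)^{t}=r^{t}\circ s^{t}$, and it commutes with taking convolution inverses, so $(r\inv)^{t}=(r^{t})\inv$), the symmetric bicharacters form a subgroup and $s$ is again symmetric. If I can exhibit a bicharacter $r$ whose symmetrization $r\circ r^{t}$ equals this $s$, then Theorem~\ref{thm:EQinterchangeTh} immediately yields $\EQ_r(a\bullet_{s_1}b)=\EQ_r(a)\bullet_{s_2}\EQ_r(b)$, because $s\circ s_1=s_2\circ s_1\inv\circ s_1=s_2$. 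The ``in particular'' statement is then just the special case $s_1=\epsilon$, $s_2=s$, for which $\bullet_{s_1}$ is the standard product $\cdot$.

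The key construction, available precisely because $V_0$ is generated by primitive elements, is to define $r$ on the generators by $r(x_i\ten x_j)=\tfrac12 s(x_i\ten x_j)$ and extend it to all of $V_0$ by the bicharacter axioms. I would first observe that these axioms force the value of $r$ on any pair of monomials to be a product of the numbers $r(x_i\ten x_j)$ (multiplicativity in each slot, together with $r(1\ten x)=r(x\ten 1)=\eta(x)=0$ for primitive $x$), so the extension exists and is unique; conversely any assignment of values $r(x_i\ten x_j)\in A$ is consistent and yields a genuine $A$-valued bicharacter. I would then verify that its symmetrization recovers $s$. Evaluating on generators exactly as in the computation preceding the theorem, the coproduct $\del(x_i)=x_i\ten 1+1\ten x_i$ collapses the sum to $r\circ r^{t}(x_i\ten x_j)=r(x_i\ten x_j)+r(x_j\ten x_i)=\tfrac12 s(x_i\ten x_j)+\tfrac12 s(x_j\ten x_i)=s(x_i\ten x_j)$, using symmetry of $s$. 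Since $r\circ r^{t}$ and $s$ are both bicharacters agreeing on the generators, they agree on all of $V_0$.

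With such an $r$ in hand, Corollary~\ref{cor:orbit} shows that $\EQ_r$ is invertible with inverse $\EQ_{r\inv}$, so the homomorphism $\EQ_r\colon(V_0\ten A,\bullet_{s_1})\to(V_0\ten A,\bullet_{s_2})$ is in fact an isomorphism, which completes the argument. I expect the main obstacle to lie in the second step: justifying rigorously that the halved assignment on generators extends to a bona fide bicharacter and that its symmetrization is exactly $s$. This is where the hypothesis that $V_0$ has no nontrivial grouplike elements is essential --- the factor $\tfrac12$ works because every generator is primitive, whereas on grouplikes the symmetrization squares values rather than doubling them (as the remark after Definition~\ref{defn:symmetrization} shows), which is exactly the complication deferred to Section~\ref{sec:NonTrivGrpLike}.
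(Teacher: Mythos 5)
Your proof is correct and is essentially the paper's own argument: the paper's proof consists of the one line ``Follows from Theorem \ref{thm:EQinterchangeTh} and Corollary \ref{cor:orbit},'' with the key construction $r(x_i\ten x_j)=\tfrac12 s(x_i\ten x_j)$ on the primitive generators given in the text immediately preceding the theorem, exactly as you reconstruct it. Your additional verifications --- that symmetric bicharacters form a subgroup under convolution (so $s=s_2\circ s_1\inv$ is symmetric) and that a bicharacter on $V_0$ exists and is uniquely determined by arbitrary values on generators --- simply spell out steps the paper leaves implicit.
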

\begin{proof}
Follows from Theorem \ref{thm:EQinterchangeTh} and Corollary \ref{cor:orbit}.
\end{proof}
Now in the present case ($V_{0}$ generated by primitives) the map
$\EQ$, although defined by a bicharacter $r$, depends only on the  symmetrization $s$ of the bicharacter $r$. In order to prove this, see
Corollary \ref{cor:Finding-r-from-s}, we first proceed to give another
description of the map $\EQ$.

The coproduct on $V_0$ is conveniently described using infinite order
differential operators. Write $x_{i}^{(1)}=x_{i}\ten 1$ and
$x_{i}^{(2)}=1\ten x_{i}$. Then we have
\begin{equation}
\label{eq:PoloCoprod}
\Delta(f(x_{i}))=f(x_{i}^{(1)}+x_{i}^{(2)})=e^{\sum_{i\ge1}
  x_{i}^{(1)}\frac{\partial}{\partial x_{i}^{(2)}}}f(x_{i}^{(2)}),
\end{equation}
and similarly
\begin{equation}
  \Delta^{2}(f(x_{i}))=f(x_{i}^{(1)}+x_{i}^{(2)}+x_{i}^{(3)})=e^{\sum_{i\ge1}
    (x_{i}^{(1)}+x_{i}^{(2)})\frac{\partial}{\partial x_{i}^{(3)}}}f(x_{i}^{(3)}),\label{eq:CoprodSquare}
\end{equation}
where $f(x_{i})$ is any polynomial in the variables $x_{i}, \ i\ge1$.

Next we fix a bicharacter on $V_0$.  Since $V_0$ is generated by the
variables $x_{i}$ the bicharacter is completely determined by the
elements
\[
q_{mn}=r(x_{m}\ten x_{n})\in A, \quad m,n\ge1.
\]
Note also that we have the following simple properties of the
bicharacter evaluated at powers of the variables:
\begin{equation}
  \label{eq:Value_r_on_variables}
  r(x_{m}^{s}\ten x_{n}^{t})=s!\delta_{st} r(x_{m}\ten x_{n})^{s}.
\end{equation}
The following lemma asserts that we can find the logarithm of the map $\EQ$:
\begin{lem}
  \label{lem:eQis EQ}
  The map $\EQ$ defined in \eqref{eq:DefEQ} is the exponential of the infinite
  order quadratic differential operator $Q_p:V_0\to V_0\ten A$,
\[
\EQ(f)=e^{Q_p}(f),\quad \text{where} \quad
Q_p=\sum_{m,n\ge1}q_{mn}\frac{\partial^{2}}{\partial x_{m}\partial
  x_{n}}.
\]
\end{lem}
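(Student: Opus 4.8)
The plan is to identify $\EQ$ with the composite $(r\ten\mathrm{id})\circ\Delta^{2}$ and to feed into it the operator form \eqref{eq:CoprodSquare} of the coproduct together with a matching operator form of the bicharacter. Indeed, by Definition \ref{defn:TheMapEQ} the map is $\EQ(f)=r(f^{\prime}\ten f^{\prime\prime})f^{\prime\prime\prime}=(r\ten\mathrm{id})\Delta^{2}(f)$, so the whole computation reduces to understanding how $r$ acts after the threefold coproduct has separated the variables into three groups.

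The first step is to record an operator description of $r$ itself. Introducing two independent sets of auxiliary variables $u_m,v_n$ (copies of the generators), I claim that for any polynomials $g,h\in V_{0}$
\[
r(g(x)\ten h(x))=\left.\exp\Big(\sum_{m,n\ge1}q_{mn}\frac{\partial}{\partial u_m}\frac{\partial}{\partial v_n}\Big)\,g(u)h(v)\right|_{u=v=0}.
\]
To justify this it is enough to check that the right-hand side is a bicharacter taking the value $q_{mn}$ on $x_m\ten x_n$, since a bicharacter on $V_{0}$ is determined by its values on the generators. The normalization $r(1\ten a)=r(a\ten 1)=\eta(a)$ holds because setting $u=v=0$ extracts constant terms, the two multiplicativity axioms of Definition \ref{defn:bich} follow from the Leibniz rule applied to the contraction operator, and \eqref{eq:Value_r_on_variables} is exactly the special case $g=x_m^{s},\,h=x_n^{t}$. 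Everything is a finite sum on polynomials, so the ``infinite order'' operators are well defined and no convergence issue arises.

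With this operator dictionary in hand I would compute directly. Writing the slot-three variables as $x^{(3)}$ and using \eqref{eq:CoprodSquare}, the element $\Delta^{2}(f)=f(x^{(1)}+x^{(2)}+x^{(3)})$ is fed into $(r\ten\mathrm{id})$, which contracts the first two tensor slots. Setting $u_m=x_m^{(1)}$ and $v_n=x_n^{(2)}$ and treating $w_m=x_m^{(3)}$ as a parameter, the operator form of $r$ gives
\[
\EQ(f)=\left.\exp\Big(\sum_{m,n\ge1}q_{mn}\frac{\partial}{\partial u_m}\frac{\partial}{\partial v_n}\Big)\,f(u+v+w)\right|_{u=v=0}.
\]
Since $f$ depends on $u,v,w$ only through the sum $u+v+w$, the chain rule gives $\partial_{u_m}\partial_{v_n}f(u+v+w)=(\partial_m\partial_n f)(u+v+w)$, so the two-slot contraction operator acts on $f(u+v+w)$ exactly as $Q_p$ acts on $f$. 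Iterating, $\exp(\sum q_{mn}\partial_{u_m}\partial_{v_n})f(u+v+w)=(e^{Q_p}f)(u+v+w)$, and evaluating at $u=v=0$ (so that the argument becomes $x^{(3)}=x$) yields $\EQ(f)=e^{Q_p}(f)$.

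The main obstacle is the first step: establishing that contraction by the bicharacter is the exponential of the second-order contraction operator followed by setting the auxiliary variables to zero. Once that dictionary is in place the rest is the chain-rule observation that a mixed $u$-$v$ derivative of a function of $u+v+w$ collapses to a single second-order derivative in the original variables, which is precisely what turns the two-slot contraction operator into the one-variable operator $Q_p$. An alternative to the uniqueness argument in step one is to verify the identity directly on monomials, where both $\EQ(f)$ and $e^{Q_p}(f)$ expand into the same sum over pairings of generators weighted by products of the $q_{mn}$; this route is more computational but avoids appealing to the uniqueness of bicharacters.
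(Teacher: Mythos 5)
Your proof is correct, but it is organized differently from the paper's. The paper feeds the exponential form \eqref{eq:CoprodSquare} of $\Delta^{2}$ directly into $r$, writing $\EQ(f)=r\bigl(e^{\sum_{m}x_{m}\partial_{x_{m}}}\ten e^{\sum_{n}x_{n}\partial_{x_{n}}}\bigr)f$ with the derivatives acting on $f$ itself; it then uses the bicharacter axioms to factor this into a product over pairs $(m,n)$ and evaluates each factor, via \eqref{eq:Value_r_on_variables}, as $r\bigl(e^{x_{m}\partial_{x_{m}}}\ten e^{x_{n}\partial_{x_{n}}}\bigr)=e^{q_{mn}\partial^{2}/\partial x_{m}\partial x_{n}}$ --- no auxiliary variables appear and $r$ is never given a standalone closed form. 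You instead first prove a self-contained Gaussian-contraction (Wick-kernel) representation of $r$ in duplicated variables $u,v$ with evaluation at $u=v=0$, pinned down by the observation, also used by the paper just before the lemma, that a bicharacter on $V_{0}$ is determined by its values $q_{mn}$ on the generators, and then collapse the two-slot contraction to $Q_{p}$ by the chain rule applied to $f(u+v+w)$. Both arguments ultimately rest on the same two facts --- additivity of the coproduct on a polynomial algebra, and the pairing identity \eqref{eq:Value_r_on_variables}, which is exactly the generator case of your kernel formula --- but your modularization buys a reusable closed expression for $r(g\ten h)$ on \emph{arbitrary} pairs of polynomials (useful, e.g., for computing symmetrizations as in Example \ref{example:TheBigEx}), at the cost of having to verify the bicharacter axioms for the kernel; your Leibniz-rule justification there is terse, since the multiplicativity check in the second slot again needs the substitution $v\mapsto v^{(1)}+v^{(2)}$, i.e.\ the same coproduct additivity, but the direct verification on monomials you offer as a fallback closes this cleanly. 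The paper's route is shorter once \eqref{eq:Value_r_on_variables} is in hand, and its mechanism --- that $e^{x_{m}\partial_{x_{m}}}$ behaves like a grouplike element inside bicharacters --- is the one it reuses in Appendix \ref{sec:alternativeproof} to handle the grouplike case, where your chain-rule collapse would not apply verbatim.
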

\begin{proof}
We use the exponential form \eqref{eq:CoprodSquare} of the coproduct:
\begin{equation*}
  \EQ(f)=r(f^{\prime}\ten f^{\prime\prime})f^{\prime\prime\prime}=
       r(e^{\sum_{m}x_{m}\partial_{x_{m}}}\ten e^{\sum_{n}x_{n}\partial_{x_{n}}})f(x_{i}),
\end{equation*}
where the exponentials are expanded as power series and the partial
derivatives act on $f$. Next we use some simple properties of the
bicharacter: we have
\[
r(e^{\sum_{m}x_{m}\partial_{x_{m}}}\ten
e^{\sum_{n}x_{n}\partial_{x_{n}}})=\prod_{m,n}
r(e^{x_{m}\partial_{x_{m}}}\ten e^{x_{n}\partial_{x_{n}}}),
\]
and, using \eqref{eq:Value_r_on_variables},
\begin{align*}
  r(e^{x_{m}\partial_{x_{m}}}\ten e^{x_{n}\partial_{x_{n}}})&=
  \sum_{s,t\ge0} r(x_{m}^{s}\ten
  x_{n}^{t})\frac{\partial^{s+t}}{s!t!\partial^sx_{m}\partial^{t}x_{n}}=\\
  &=\sum_{s,} r(x_{m}\ten
  x_{n})^{s}
  \left(
\frac{\partial^{2}}{\partial x_{m}\partial x_{n}}
  \right)^{s}/s!=e^{r(x_{m}\ten x_{n})\frac{\partial^{2}}{\partial x_{m}\partial x_{n}}}.
\end{align*}
Combining the last two results gives the proof of the lemma.
\end{proof}

We can rephrase Lemmas \ref{lem:EQisHomom} and \ref{lem:eQis EQ} by
introducing the notion of a $\bullet$ polynomial.  If $P\in V_{0}$,
then $P$ is a linear combination of monomials $x_{i_{1}}x_{i_{2}}\dots
x_{i_{s}}$. Define then $P^{\bullet}$, the $\bullet$
polynomial\footnote{The $\bullet$ polynomial of a monomial
  $x_{i_{1}}x_{i_{2}}\dots x_{i_{s}}$ can be thought of as a "normal ordered product of states"---the analog
  of the normal ordered product of fields $a_{i_{1}}(z),
  a_{i_{2}}(z),\dots,a_{i_{s}}(z)$ in the theory of vertex
  algebras. See Appendix \ref{sec:NormalOrdProd}.}
of $P$, as the same linear combination of expressions
$x_{i_{1}}\bullet x_{i_{2}} \bullet\dots\bullet x_{i_{s}}$.

We get then the following result from Lemmas \ref{lem:EQisHomom} and \ref{lem:eQis EQ}:
\begin{thm}
\label{thm:e^QisHomom}
Let $V_0=\mathbb{C}[x_{1},x_{2},\dots]$, with $r$ a bicharacter with
values in $A$ on $V_{0}$ and let for $q_{mn}=r(x_{m}\ten x_{n})$ the
quadratic differential operator $Q_p$ be
\[
Q_{p}=\sum_{m,n\ge1}q_{mn}\frac{\partial^{2}}{\partial
  x_{m}\partial x_{n}}
\]
Let $s=r\circ r^{t}$ be the symmetrization of $r$ and
$\bullet=\bullet_{s}$ the associated $\bullet$ product on $V_{0}$.
Then for all $P\in V_{0}$ we have
\[
e^{Q_{p}}(P)=EQ_{r}(P)=r(P^{\prime}\ten
P^{\prime\prime})P^{\prime\prime\prime}.
\]
Moreover we also have
\[
e^{Q_{p}}(P)=P^{\bullet}.
\]
and $e^{Q_{p}}$ is a homomorphism from $(V_0\ten A, \cdot)$ to
$(V_0\ten A,\bullet )$.
\end{thm}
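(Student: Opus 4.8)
The plan is to observe that all three assertions follow from the two preceding lemmas once they are lined up correctly, so the proof is essentially an assembly of Lemmas \ref{lem:EQisHomom} and \ref{lem:eQis EQ} together with the fact that $\EQ_r$ fixes primitive generators. First I would dispose of the chain $e^{Q_p}(P)=\EQ_r(P)=r(P^{\prime}\ten P^{\prime\prime})P^{\prime\prime\prime}$: the first equality is precisely Lemma \ref{lem:eQis EQ}, which identifies the linear map $\EQ_r$ with the exponential of the quadratic operator $Q_p$, while the second equality is just the defining formula \eqref{eq:DefEQ} of $\EQ_r$ written in the extended Sweedler notation $\Delta^2(P)=P^{\prime}\ten P^{\prime\prime}\ten P^{\prime\prime\prime}$.

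For the homomorphism property I would simply combine the identification $e^{Q_p}=\EQ_r$ from the previous step with Lemma \ref{lem:EQisHomom}, which states $\EQ_r(ab)=\EQ_r(a)\bullet_s\EQ_r(b)$. Since $s$ is the symmetrization of $r$ and $\bullet=\bullet_s$, this says exactly that $e^{Q_p}$ is an algebra homomorphism from $(V_0\ten A,\cdot)$ to $(V_0\ten A,\bullet)$.

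Finally, to obtain $e^{Q_p}(P)=P^{\bullet}$, I would use linearity to reduce to a monomial $x_{i_1}x_{i_2}\cdots x_{i_s}$. Repeated application of the homomorphism property gives $e^{Q_p}(x_{i_1}\cdots x_{i_s})=e^{Q_p}(x_{i_1})\bullet\cdots\bullet e^{Q_p}(x_{i_s})$, and since each generator is primitive, the Example following Definition \ref{defn:TheMapEQ} records that $\EQ_r(x_{i_j})=x_{i_j}$; hence each factor is unchanged and the right-hand side equals $x_{i_1}\bullet\cdots\bullet x_{i_s}$, which is by definition the $\bullet$ polynomial of the monomial. Summing over the monomials appearing in $P$ then yields the claim. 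I do not expect a real obstacle in this theorem, since the substantive content was already carried out in the two lemmas; the only delicate point is the implicit well-definedness of $P^{\bullet}$, which holds because $s$ is symmetric and hence $\bullet$ is commutative by Lemma \ref{lem:twistingbyBich}, so the reordering of factors within a monomial in the definition of $P^{\bullet}$ is harmless.
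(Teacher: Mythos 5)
Your proposal is correct and takes essentially the same route as the paper, which states this theorem as an immediate consequence of Lemmas \ref{lem:EQisHomom} and \ref{lem:eQis EQ} --- exactly the two ingredients you assemble. The additional details you supply (the induction over monomial factors using $\EQ_r(x_{i_j})=x_{i_j}$ for primitives, and the appeal to Lemma \ref{lem:twistingbyBich} for the well-definedness of $P^{\bullet}$) simply make explicit what the paper leaves implicit.
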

\begin{example}\label{Example:succesivebulletproduct}
  For instance: the $\bullet$ monomial of a single variable is
  \begin{equation*}
    (x_m)^{\bullet}  =e^{Q_p}(x_m)= \EQ_p(x_m) =x_m,
    \end{equation*}
    and for two variables we have
    \begin{equation*}
    (x_mx_n)^{\bullet}  = x_m \bullet  x_n =m_{s}(x_m\ten x_n)=x_m x_n +s(x_m \ten x_n) = x_m x_n +(q_{m n}+q_{n m}),
  \end{equation*}
   One should note the above is indeed true both when $m \ne n$ and  when $m=n$.

   One should also be careful when mixing the two products---we can
   write the twisting (it involves both products):
  \begin{equation*}
    (x_m x_n) \bullet  x_l =m_{s}(x_m x_n\ten x_l) = x_nx_mx_m +(q_{m l}+q_{l m})x_n +(q_{n l}+q_{l n})x_m,
  \end{equation*}
  but the above is \textbf{not} the $\bullet$ monomial $P^{\bullet} =x_m \bullet  x_n\bullet
  x_l=e^Q_p(P)$, corresponding to the monomial $P=x_mx_nx_l$; $P^{\bullet} $ is
  by definition the \textbf{successive} application of twisting by the
  bicharacter:
  \begin{equation*}
    P^{\bullet} =  x_m \bullet  x_n\bullet  x_l
    =x_nx_mx_m +(q_{m n}+q_{n m})x_l +(q_{m l}+q_{l m})x_n +(q_{n l}+q_{l n})x_m.
  \end{equation*} {}
\end{example}
\begin{example}
  To get examples of bicharacters, and hence of maps $\EQ_r$ and
  quadratic differential operators one has great freedom. One first
  chooses a $\mathbb{C}$-algebra $A$, and then for each pair of primitive
  elements $x_{m},x_{n}$ an element $q_{mn}$ in $A$. These can be
  conveniently encoded in a generating series
\[
f(x,y)=\sum_{m,n=1}^{\infty}q_{mn}x^{m}y^{n}
\]
so that
\[
r(x_{m}\ten x_{n})=q_{mn}=\frac{\partial_{x}^{m}}{m!}\frac{\partial_{x}^{n}}{n!}f(x,y).
\]
See Example \ref{example:TheBigEx} for an explicit choice of such
generating series.
\end{example}
\begin{cor}\begin{bf} (to Theorem \ref{thm:newinterch})\end{bf}
\label{cor:Finding-r-from-s}
Let $s_1, s_2: V_0 \to A$ be two symmetric bicharacters defined on the
polynomial algebra $V_0$. Then the  map $\EQ_r\colon V_0 \to
V_0\ten A$ intertwining the $\bullet$ products
\begin{equation}
\EQ_{r}(a \bullet_{s_{1}} b) = \EQ_{r}(a) \bullet_{s_{2}} \EQ_{r}(b),
\end{equation}
is  unique  among the maps $EQ_r$, for various bicharacters $r$, and depends only on the symmetrization bicharacter $s=r\circ r^t =s_2\circ s_1^{-1}$.
\end{cor}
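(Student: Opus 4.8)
The plan is to reduce everything to the exponential description of $\EQ_r$ furnished by Lemma \ref{lem:eQis EQ}, and to exploit the fact that second-order partial derivatives commute. Recall that $\EQ_r = e^{Q_p}$ with $Q_p = \sum_{m,n\ge1} q_{mn}\,\partial^2/(\partial x_m\partial x_n)$ and $q_{mn}=r(x_m\ten x_n)$. The crucial observation is that since $\partial_{x_m}\partial_{x_n}=\partial_{x_n}\partial_{x_m}$, relabelling $m\leftrightarrow n$ in the sum gives $\sum_{m,n}q_{mn}\partial_{x_m}\partial_{x_n}=\sum_{m,n}q_{nm}\partial_{x_m}\partial_{x_n}$, so that $Q_p=\tfrac12\sum_{m,n}(q_{mn}+q_{nm})\,\partial^2/(\partial x_m\partial x_n)$. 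The computation preceding Theorem \ref{thm:newinterch} shows that on the primitive generators the symmetrization $s=r\circ r^t$ satisfies $s(x_m\ten x_n)=q_{mn}+q_{nm}$, whence $Q_p=\tfrac12\sum_{m,n}s(x_m\ten x_n)\,\partial^2/(\partial x_m\partial x_n)$. Thus $Q_p$, and therefore the map $\EQ_r=e^{Q_p}$, depends on $r$ only through its symmetrization $s$.

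Next I would pin down which symmetrization is forced by the intertwining requirement, and record existence. By Theorem \ref{thm:EQinterchangeTh} the relation $\EQ_r(a\bullet_{s_1}b)=\EQ_r(a)\bullet_{s_2}\EQ_r(b)$ holds exactly when $s_2=s\circ s_1$, where $s$ is the symmetrization of $r$. Since the bicharacters form an abelian group under convolution, with inverse given by \eqref{eq:3}, this rearranges to $s=s_2\circ s_1^{-1}$, an expression determined entirely by the given $s_1,s_2$. A bicharacter $r$ with this prescribed symmetrization exists on the polynomial algebra by setting $r(x_m\ten x_n)=\tfrac12 s(x_m\ten x_n)$ on generators, as already noted before Theorem \ref{thm:newinterch}; this both supplies the intertwining map and re-derives the existence half of Theorem \ref{thm:newinterch}.

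Finally I would combine the two steps to conclude uniqueness. If $r$ and $r'$ are any two bicharacters whose maps $\EQ_r,\EQ_{r'}$ both intertwine $\bullet_{s_1}$ with $\bullet_{s_2}$, then by the previous paragraph both must have the same symmetrization $s=s_2\circ s_1^{-1}$; by the first paragraph the corresponding operators $Q_p$ coincide, so $\EQ_r=e^{Q_p}=\EQ_{r'}$. Hence the intertwining map is unique among the $\EQ_r$ and depends only on $s=s_2\circ s_1^{-1}$. The main obstacle, and the genuinely new point over Theorem \ref{thm:newinterch}, is the first step: one must recognize that the commutativity of the partial derivatives collapses the dependence of $Q_p$ on the full bicharacter $r$ down to its symmetric part. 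Once this is in hand, everything else is bookkeeping in the convolution group of bicharacters together with the explicit construction of $r$ from $s$.
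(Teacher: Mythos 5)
Your proposal is correct and follows essentially the same route as the paper's own proof: the heart of both arguments is the observation that commutativity of the partial derivatives lets one rewrite $Q_p=\sum_{m,n}\frac{q_{mn}+q_{nm}}{2}\frac{\partial^{2}}{\partial x_{m}\partial x_{n}}=\sum_{m,n}\frac{s(x_{m}\ten x_{n})}{2}\frac{\partial^{2}}{\partial x_{m}\partial x_{n}}$, so that $\EQ_{r}=e^{Q_{p}}$ depends only on the symmetrization $s$, with the identification of $s$ coming from Theorem \ref{thm:EQinterchangeTh} and Corollary \ref{cor:orbit}. Your normalization $s=s_{2}\circ s_{1}^{-1}$ agrees with the statement of the corollary (the paper's proof writes $s_{1}\circ s_{2}^{-1}$, an apparent typo), and your explicit existence step via $r(x_{m}\ten x_{n})=\tfrac12 s(x_{m}\ten x_{n})$ merely makes visible what the paper established before Theorem \ref{thm:newinterch}.
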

\begin{proof}
From  Lemma \ref{cor:orbit}, we can determine the
  symmetrization bicharacter $s=s_1\circ (s_2)^{-1}$ for the map
  $\EQ_{r}$. It is obvious we cannot find the bicharacter $r$ from it's
  symmetrization $s$ uniquely, but on $V_0$ the map $\EQ_{r}$ doesn't actually depend on $r$, but
  only on its symmetrization bicharacter $s=r\circ r^t$: Since the map $\EQ_{r}$
  on $V_0$ actually coincides with the map $e^{Q}$, and $Q$ is a
  quadratic differential operator on a polynomial algebra, we can
  rewrite $Q$ as
\begin{align*}
Q &=\sum_{m,n\ge 1}q_{mn}\frac{\partial^{2}}{\partial x_{m}\partial
  x_{n}}=\sum_{m,n\ge 1}\frac{q_{mn} +q_{nm}}{2}\frac{\partial^{2}}{\partial x_{m}\partial
  x_{n}}=\\
   &=\sum_{m,n\ge 1}\frac{r(x_m \ten x_n) + r(x_n \ten x_m)}{2}\frac{\partial^{2}}{\partial x_{m}\partial
  x_{n}}=\sum_{m,n\ge 1}\frac{s(x_m \ten x_n)}{2}\frac{\partial^{2}}{\partial x_{m}\partial
  x_{n}}
  \end{align*}
Thus, we see that the map  $e^{Q}$, and so  to the map $\EQ_{r}$ on $V_0$, depends only on the symmetrization bicharacter $s$. Thus, as a map, the intertwiner $\EQ$ of the bullet products $\bullet_{s_{1}}$ and $\bullet_{s_{2}}$  is unique.
\end{proof}

\section{The  case of a Commutative and Cocommutative Hopf Algebra.}
\label{sec:NonTrivGrpLike}

In the previous section we saw that if $M$ is generated by primitive
elements any two $\bullet$ products can be intertwined (see Theorem \ref{thm:newinterch}). This depended on the fact that for a Hopf algebra $M$ generated by primitive
elements every symmetric bicharacter is a symmetrization. This is no
longer true if $M$ contains nontrivial grouplike elements (i.e., grouplike elements distinct from the unit element $1_{M}$ of $M$).

\begin{example}
  Let $M$ be the group algebra of the free rank 1 abelian group. $M$
  has a basis $e^{n\alpha}$, $n\in \mathbb{Z}$, with
  multiplication $e^{m\alpha}e^{n\alpha }=e^{(m+n)\alpha}$, and $
  e^{0}=1_{M}$. To define a bicharacter choose $A=\mathbb{C}[z]$, and
  define on the generator
\[
s(e^{\alpha}\ten e^{\alpha})=z.
\]
Clearly there is no bicharacter $r$ (with values in $A$) with a
symmetrization $s$: if such a bicharacter would exist we would have
\[
r\circ r^{t}(e^{\alpha}\ten e^{\alpha})=r(e^{{\alpha}^{\prime}}\ten
e^{{\alpha}^{\prime}})r(e^{{\alpha}^{\prime\prime}}\ten
e^{{\alpha}^{\prime\prime}})=r(e^{\alpha}\ten e^{\alpha})^{2}=z
\]
This has no solution in $A$.

Similarly, there is no isomorphism $\EQ_{r}\colon
(M_{A},\cdot)\to (M_{A},\bullet_{s})$: For we have
\begin{align*}
& \EQ_r(e^{2\alpha}) =e^{2\alpha} r(e^{2\alpha}\ten e^{2\alpha})=e^{2\alpha} r(e^{\alpha}\ten e^{\alpha})^4\\
& \EQ_r(e^{\alpha})\bullet_s \EQ_r(e^{\alpha}) =e^{2\alpha}r(e^{\alpha}\ten e^{\alpha})^2s(e^{\alpha}\ten e^{\alpha})= ze^{2\alpha}r(e^{\alpha}\ten e^{\alpha})^2.
\end{align*}
Thus if such an isomorphism $\EQ_{r}$ would exist we would have
\begin{equation*}
e^{2\alpha} r(e^{\alpha}\ten e^{\alpha})^4 =\EQ_r(e^{2\alpha})=\EQ_r(e^{\alpha})\bullet_s \EQ_r(e^{\alpha})=ze^{2\alpha}r(e^{\alpha}\ten e^{\alpha})^2 ,
\end{equation*}
which is not possible for a bicharacter $r$ taking values in $A=\mathbb{C}[z]$.
\end{example}

From now on in this paper we will assume that for all
bicharacters $r$ (symmetric or not) the values on grouplikes is a
constant, i.e.,
\[
r(g\ten \tilde g)\in \mathbb{C}\subset A,
\]
for any  $g, \tilde g$ are grouplike. In this case it is immediate  that all
symmetric bicharacters are symmetrizations, so that all $\bullet$
product are related to $\cdot=\bullet_{\epsilon}$ by some map
$\EQ_{r}$. In fact, for this to be true, we only need to require that the values of the bicharacters on grouplikes are exact squares; but we need the "constant on grouplikes" condition if we are to find logarithm of the map $EQ_r$ (see Lemma \ref{lem:eQis EQ2} and  Remark \ref{remark:reasonForLog}). Even if this "constant on grouplikes" condition is imposed,  in contrast to Corollary  \ref{cor:Finding-r-from-s}, it is clear that  the intertwiner maps $EQ_r$ are not unique when grouplike elements are present.

We require our Hopf algebras to be commutative and cocommutative.
This implies (see \cite{MR0214646} for details) that $M$ is the
product of a group algebra $\mathbb{C}[G]$, and an universal
enveloping algebra $\mathcal{U}(L)$, where $G$ and $L$ are Abelian.
For simplicity assume that $G$ is finitely generated. Then $G$
decomposes as $G=G_{\mathrm{Tor}}\times G_{\mathrm{Free}}$, with
$G_{\mathrm{Tor}}$ the torsion part, a finite group consisting of
elements $\delta$ of finite order, and $G_{\mathrm{Free}}$ the free
part, generated by elements
$\alpha_{1},\alpha_{2},\dots,\alpha_{\ell}$ of infinite order. Let $L$
have basis $x_{n}, n\ge1$. Then our Hopf algebra has the form
\[
M=\mathbb{C}[\delta_{1},\delta_{2},\dots,\delta_{s},e^{\pm\alpha_{1}},\dots,e^{\pm\alpha_{l}},x_{1},x_{2},\dots],
\]
where $\delta_{t}^{N_{t}}=1$, $e^{\alpha_{i}}e^{-\alpha_{i}}=1$. The
Hopf structure is determined by declaring the elements
$\delta_{1},\dots,\delta_{s}, e ^{\alpha_1}, \dots , e^{\alpha_\ell}$
to be grouplike, and the elements $x_{1},x_{2},\dots $ to be
primitive.

We would like to give a logarithm  of the map
$\EQ_{r}$ (similar to  Lemma~\ref{lem:eQis EQ}) in the more general case when grouplike elements are present in $M$.

But first consider the values of a bicharacter on torsion elements. If
$\delta\in M$ is finite order grouplike element, $\delta^{N}=1$, then we have for any grouplike $g$ $r(\delta^{N}\ten g)=r(\delta\ten g)^{N}=1$. So in this case $r(\delta\ten g)$ is a root of unity. In
the same way,  if $x$ is primitive we have $r(\delta\ten
x)=0$. This implies that the torsion elements contribute to $r$ and
hence to $\EQ$ only root of unity factors. Therefore we will assume
 for simplicity's sake that $M$ is in fact torsion free. (In
Remark \ref{rem:torsion_elements} we will discuss the impact of
torsion elements.) From now on $V$ will denote a commutative and
cocommutative Hopf algebra without torsion elements,
$V=\mathbb{C}[G]\ten \mathcal{U}(L)$ with $G$ a finitely
generated free Abelian group, and $L$ with countable basis.

This means that we assume  $V$ has the form
\[
V=\mathbb{C}[e^{\pm\alpha_{1}},\dots,e^{\pm\alpha_{l}},x_{1},x_{2},\dots].
\]
We also keep the condition that $r(e^{\alpha_{i}}\ten
e^{\alpha_{j}})=e^{a_{ij}}\in\mathbb{C}$.

Then we will argue that we can find (in the torsion free case) a
quadratic differential operator $\mathbf{Q}$ serving as a logarithm
to the map $\EQ_r$ and establish a relation similar to that of Theorem
\ref{thm:e^QisHomom} for more general Hopf algebras with grouplike elements (as
above).

\begin{defn}
  Let $V$ be a torsion free commutative cocommutative Hopf algebra as above,
  $V=\mathbb{C}(G)\ten \mathbb{C}[x_{1},x_{2},\dots]$,
  $G=\mathbb{Z}[\alpha_1, \alpha_2, \dots , \alpha_k]$. Define
  derivations $\frac{\partial}{\partial \alpha_i}: V\to V$, $i=1, \dots
  , k$, by
\begin{equation}
  \frac{\partial}{\partial \alpha_i} (e^{\alpha }P(x))=m_i  e^{\alpha }P(x)
\end{equation}
for any  $\alpha =\sum _{i=1}^k m_i \alpha_i, \ \ m_i\in \mathbb{Z}$, \  $P(x)\in \mathbb{C}[x_{1},x_{2},\dots]$.
\end{defn}
\begin{example}
We have
\begin{equation*}
  \frac{\partial}{\partial \alpha _i} (e^{\alpha _j}) =\delta _{i j} e^{\alpha _j}.
\end{equation*}
Recall that on $V=\mathbb{C}(G)\ten \mathbb{C}[x_{1},x_{2},\dots]$ we
also have the partial derivatives with respect to the variables $x_n$,
with
\begin{align*}
  &\frac{\partial}{\partial x_n} (e^{\alpha }P(x)) =e^{\alpha
  }(\frac{\partial}{\partial x_n}P(x)), \ \ \text{for \ any }\ n\ge 1,
  \ \ \alpha \ \ \text{as above}.
\end{align*}
Of course, $\frac{\partial}{\partial x_n}$ commutes with $\frac{\partial}{\partial \alpha_{i}}$.
\end{example}

\begin{lem}
  \label{lem:eQis EQ2}
  Let
  $V=\mathbb{C}[e^{\pm\alpha_{1}},\dots,e^{\pm\alpha_{\ell}},x_{1},x_{2},\dots]$
  as before. Let $A$ be a commutative $\mathbb{C}$-algebra and $r$ an
  $A$-valued bicharacter with values on generators
  \begin{equation}\label{eq:4}
\begin{aligned}
&r(e^{\alpha _i}\ten e^{\alpha _j})=e^{a_{i j}}\in \mathbb{C}\\
&r(e^{\alpha _i}\ten x_m)=b_{i m}\in A\\
&r(x_m \ten e^{\alpha _i})=c_{m i}\in A\\
&r(x_i\ten x_m) =q_{mn}\in A
\end{aligned}
\end{equation}
Then the map $\EQ\colon V \to V_{A}, \ \ a\mapsto r(a^{\prime}\ten
a^{\prime\prime})a^{\prime\prime\prime}$ is the exponential of an infinite
  order quadratic operator $\mathbf{Q}$
  \begin{equation}
  \EQ_r(m)=e^{\mathbf{Q}}(m), \ \ m\in V,
  \end{equation}
where $\mathbf{Q}$ is defined by
\begin{multline}\label{eq:6}
  \mathbf{Q}=\sum _{i, j=1}^{\ell} a_{i j} \frac{\partial}{\partial
    \alpha_i}\frac{\partial}{\partial \alpha_j}+\sum
  _{i=1}^{\ell}\sum_{m\ge 1}b_{im} \frac{\partial}{\partial \alpha_i}
  \frac{\partial}{\partial x_{m}}+ \\+\sum_{m\ge 1} \sum
  _{i=1}^{\ell}c_{mi}\frac{\partial}{\partial x_{m}}
  \frac{\partial}{\partial \alpha_i} +
 \sum_{m,n\ge1}q_{mn}\frac{\partial^{2}}{\partial x_{m}\partial
    x_{n}}.
\end{multline}
\end{lem}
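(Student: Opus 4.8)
The plan is to reduce to Lemma~\ref{lem:eQis EQ} rather than to imitate its proof directly, since a direct imitation would force one to introduce formal logarithms of the grouplike generators $e^{\alpha_i}$. I would split $\mathbf{Q}$ into its four blocks $\mathbf{Q}_{GG}=\sum_{i,j}a_{ij}\partial_{\alpha_i}\partial_{\alpha_j}$, $\mathbf{Q}_{GP}=\sum_{i,m}b_{im}\partial_{\alpha_i}\partial_{x_m}$, $\mathbf{Q}_{PG}=\sum_{m,i}c_{mi}\partial_{x_m}\partial_{\alpha_i}$ and $\mathbf{Q}_{PP}=\sum_{m,n}q_{mn}\partial_{x_m}\partial_{x_n}$, so that $\mathbf{Q}=\mathbf{Q}_{GG}+\mathbf{Q}_{GP}+\mathbf{Q}_{PG}+\mathbf{Q}_{PP}$. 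All four are built from the mutually commuting derivations $\partial_{\alpha_i},\partial_{x_m}$, hence commute, and $e^{\mathbf{Q}}=e^{\mathbf{Q}_{GG}}e^{\mathbf{Q}_{GP}}e^{\mathbf{Q}_{PG}}e^{\mathbf{Q}_{PP}}$; it therefore suffices to produce a matching factorization of $\EQ_r$.

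First I would introduce four elementary bicharacters $r_{GG},r_{GP},r_{PG},r_{PP}$, each retaining only one of the four blocks of data \eqref{eq:4} and taking the identity value $\epsilon$ on every other pair of generators (for instance $r_{GG}(e^{\alpha_i}\ten e^{\alpha_j})=e^{a_{ij}}$ while $r_{GG}$ equals $\epsilon$ on all pairs involving some $x_m$). Since $V$ is free commutative on the $x_m$ and Laurent on the $e^{\alpha_i}$, and the grouplike--grouplike values $e^{a_{ij}}$ are invertible, each prescription extends uniquely to a genuine bicharacter. A short computation on generators, using that a grouplike is diagonal under $\del$ while each $x_m$ is primitive, then verifies $r=r_{GG}\circ r_{GP}\circ r_{PG}\circ r_{PP}$. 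Iterating the homomorphism property $\EQ_{r_1\circ r_2}=\EQ_{r_1}\circ\EQ_{r_2}$ established just before Corollary~\ref{cor:orbit} yields $\EQ_r=\EQ_{r_{GG}}\circ\EQ_{r_{GP}}\circ\EQ_{r_{PG}}\circ\EQ_{r_{PP}}$.

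It remains to identify each elementary factor with the exponential of the corresponding block, on a general element $e^{\alpha}P$ with $\alpha=\sum_i p_i\alpha_i$ and $P\in\mathbb{C}[x_1,x_2,\dots]$. For $r_{PP}$ this is Lemma~\ref{lem:eQis EQ}: as $r_{PP}$ is trivial on grouplikes, $\EQ_{r_{PP}}$ leaves $e^{\alpha}$ inert and acts on $P$ exactly as in the polynomial case, so $\EQ_{r_{PP}}=e^{\mathbf{Q}_{PP}}$. For $r_{GG}$, the identities $r_{GG}(x_m\ten z)=0=r_{GG}(z\ten x_m)$ for all $z$ force every term of $\del^{2}P$ carrying a positive-degree factor in the first two slots to vanish, so only the grouplike pairing survives: $\EQ_{r_{GG}}(e^{\alpha}P)=r_{GG}(e^{\alpha}\ten e^{\alpha})\,e^{\alpha}P=e^{\sum_{i,j}a_{ij}p_ip_j}e^{\alpha}P$, which is $e^{\mathbf{Q}_{GG}}(e^{\alpha}P)$ since $\partial_{\alpha_i}$ multiplies $e^{\alpha}P$ by $p_i$. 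For $r_{GP}$ (and symmetrically $r_{PG}$) one uses $r_{GP}(x_m\ten z)=0$, so in $\EQ_{r_{GP}}(e^{\alpha}P)=\sum r_{GP}(e^{\alpha}P_{(1)}\ten e^{\alpha}P_{(2)})e^{\alpha}P_{(3)}$ only the counit part of $P_{(1)}$ contributes; coassociativity then collapses $\del^{2}$ to $\del$, and since $r_{GP}(e^{\alpha}\ten x_m)=\sum_i p_i b_{im}$ extends multiplicatively over the grouplike, the outcome is the translation $e^{\alpha}P(x_m+\sum_i p_i b_{im})$, which is precisely $e^{\mathbf{Q}_{GP}}(e^{\alpha}P)$.

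Combining the four identifications with the commutativity of the blocks gives $\EQ_r=e^{\mathbf{Q}_{GG}}e^{\mathbf{Q}_{GP}}e^{\mathbf{Q}_{PG}}e^{\mathbf{Q}_{PP}}=e^{\mathbf{Q}}$. Every operator above is applied to a fixed $e^{\alpha}P$ in which $P$ involves only finitely many variables, so the infinite-order operators and the infinite convolution products act through finitely many nonzero terms and are well defined, just as in Lemma~\ref{lem:eQis EQ}. The step I expect to require the most care is the bookkeeping for the mixed factors $r_{GP},r_{PG}$: one must argue cleanly that only the degree-zero part of the polynomial in the inert tensor slot contributes, so that $\del^{2}$ reduces to $\del$ and the action becomes a genuine translation; by contrast the grouplike and polynomial blocks are essentially immediate.
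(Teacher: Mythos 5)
Your proposal is correct, and it takes a genuinely different route from the paper's own argument. The paper splits the \emph{operator} $\mathbf{Q}=Q_0+Q_1+Q_p$ into three commuting blocks (combining both mixed sums into $Q_1$), identifies $e^{Q_p}(P)=\EQ(P)$ and $e^{Q_0}(e^{\alpha})=\EQ(e^{\alpha})$, shows by direct computation that $e^{Q_1}(e^{\alpha}P)=e^{\alpha}\bullet P$ is the mixed twisting, and then glues the pieces using the homomorphism property $\EQ(ab)=\EQ(a)\bullet\EQ(b)$ of Lemma \ref{lem:EQisHomom} --- so the bullet product is essential to the paper's proof. You instead factor the \emph{bicharacter} in the convolution group, $r=r_{GG}\circ r_{GP}\circ r_{PG}\circ r_{PP}$, and invoke the group homomorphism $\EQ_{r_1\circ r_2}=\EQ_{r_1}\circ\EQ_{r_2}$ (the lemma preceding Corollary \ref{cor:orbit}), reducing the statement to four elementary identifications that never mention the $\bullet$ product at all; your check that convolution is pointwise-multiplicative on grouplike pairs and additive on pairs involving a primitive is exactly what makes the factorization of $r$ hold on generators. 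Each route buys something: the paper's argument reinforces its central theme $e^{\mathbf{Q}}(a)=a^{\bullet}$ and reuses machinery already built, whereas yours is more modular --- the only computational content sits in the four elementary factors, with the $r_{PP}$ case delegated verbatim to Lemma \ref{lem:eQis EQ}, and the mixed factors correctly recognized as translations $P(x_m)\mapsto P(x_m+\sum_i p_i b_{im})$ via the evaluation character in one slot of $\Delta$. The one obligation your route incurs that the paper's avoids is the existence and uniqueness of the elementary bicharacters extending the prescribed generator values; your appeal to freeness of $V$ and invertibility of the grouplike--grouplike values $e^{a_{ij}}$ (needed for $e^{-\alpha_i}$) settles this, and it is the same extension principle the paper itself uses tacitly when it specifies $r$ by \eqref{eq:4}. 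It is worth noting that your factorization of $\EQ_r$ into elementary exponential factors is close in outcome, though not in method, to the paper's alternative proof in Appendix \ref{sec:alternativeproof}, which reaches a product of factors $e^{a_{ij}\partial_{\alpha_i}\partial_{\alpha_j}}$, $e^{b_{in}\partial_{\alpha_i}\partial_{x_n}}$, etc., via the operator form of the coproduct and the observation that $e^{\alpha_i\partial_{\alpha_i}}$ behaves like a grouplike element inside bicharacters; your convolution argument obtains the same factorization structurally, without the operator description of $\Delta^2$.
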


\begin{remark}
\label{remark:reasonForLog}
The first equation of ~\eqref{eq:4} is the reason that we require
\\ $r(e^{\alpha _i}\ten e^{\alpha _j})$ to be a constant, as opposed to no such requirement
for instance on the bicharacter $r(x_i\ten x_m)$, which is allowed to be
in a more general target algebra $A$. For example, when
$A=\mathbf{C}[z]$, if $r(e^{\alpha _i}\ten e^{\alpha _j})\in A$ is not
a constant, the "logarithm" $a_{i j}$ is not an element of $A$.
\end{remark}
Before giving the proof, we will start  with some examples.
\begin{example}
\label{example:grouplike}
In particular, for an element $e^{\alpha _i}$  we have:
  \begin{equation}
  e^{\mathbf{Q}}(e^{\alpha _i})=e^{\alpha _i}e^{a_{i i}}=e^{\alpha _i}r(e^{\alpha _i}\ten e^{\alpha _i})=\EQ(e^{\alpha _i})
    \end{equation}
For two independent grouplike  elements $e^{\alpha _i}, \ e^{\alpha _j}$ we have:
 \begin{align*}
   &e^{\mathbf{Q}}(e^{\alpha _i}e^{\alpha _j})=e^{\alpha _i}e^{\alpha _j} e^{a_{i i} +a_{i j} +a_{j i} +a_{j j}}=\\
   & =e^{\alpha _i}e^{\alpha _j}r(e^{\alpha _i}\ten e^{\alpha
     _i})r(e^{\alpha _i}\ten
   e^{\alpha _j})r(e^{\alpha _j}\ten e^{\alpha _i})r(e^{\alpha _j}\ten e^{\alpha _j})=\\
   & =e^{\alpha _i}e^{\alpha _j}r(e^{\alpha _i}e^{\alpha _j}\ten e^{\alpha _i}e^{\alpha _j})=\EQ(e^{\alpha _i}e^{\alpha _j}).\qquad
    \end{align*}
\end{example}
We now  proceed with the proof of the lemma.
\begin{proof}
We can spit the quadratic operator $\mathbf{Q}$ in 3 parts:
\begin{equation}
 \mathbf{Q}=Q_0 +Q_1 +Q_p,
 \end{equation}
 where
 \begin{align}
   &Q_0=\sum _{i, j=1}^{k} a_{i j} \frac{\partial}{\partial \alpha_i}\frac{\partial}{\partial \alpha_j}\\
   &Q_1=\sum _{i=1}^{k}\sum_{m\ge 1} b_{im} \frac{\partial}{\partial
     \alpha_i} \frac{\partial}
   {\partial x_{m}}+ \sum_{m\ge 1} \sum _{i=1}^{k}c_{mi}
   \frac{\partial}{\partial x_{m}}
 \frac{\partial}{\partial \alpha_i}\\
   &Q_p=\sum_{m,n\ge1}q_{mn}\frac{\partial^{2}}{\partial
     x_{m}\partial x_{n}}
   \end{align}
   Notice that the third part, the operator $Q_p$, coincides with the
   operator $Q_p$ used in section \ref{sec:PolyAlg}.  These three
   differential operators commute among each other, thus we can write
 \begin{equation}
 e^{\mathbf{Q}}=e^{Q_1}e^{Q_{p}}e^{Q_0}
 \end{equation}
 Moreover, if $e^{\alpha}, \ \alpha =\sum _{i=1}^k m_i \alpha_i, \
 m_i\in \mathbb{Z}$ is a grouplike element, and $P(x)\in
 \mathbb{C}[x_{1},x_{2},\dots]$, we have
 \begin{equation}
 e^{Q_{p}}(e^{\alpha})=e^{\alpha}, \quad e^{Q_{p}}(e^{\alpha}P(x))=e^{\alpha}e^{Q_{p}}(P(x))
 \end{equation}
 On the other hand, we have
 \begin{equation}
 e^{Q_0}(P(x))=P(x), \quad e^{Q_0}(e^{\alpha}P(x))=e^{Q_0}(e^{\alpha})P(x)
 \end{equation}
 Thus
 \begin{equation}
 e^{\mathbf{Q}}(e^{\alpha}P(x))=e^{Q_1}e^{Q_{p}}e^{Q_0}(e^{\alpha}P(x))=e^{Q_1}
 \big(e^{Q_0}(e^{\alpha})e^{Q_{p}}(P(x))\big)
 \end{equation}
 We know from Theorem \ref{thm:e^QisHomom} that $e^{Q_{p}}(P(x))$ is
 the $\bullet$ polynomial $P^\bullet (x)$, which equals $\EQ(P(x))$.
 It is not hard to show, similar to example \ref{example:grouplike},
 that for purely grouplike elements
 \begin{equation}
 e^{Q_0}(e^{\alpha})=\EQ(e^{\alpha}).
 \end{equation}
 Thus we have
 \begin{equation}
 e^{\mathbf{Q}}(e^{\alpha}P(x))=e^{Q_1}\big(\EQ(e^{\alpha})\EQ(P(x))\big).
 \end{equation}
 We are preparing to use Lemma \ref{lem:EQisHomom}, and to do that we
 need to show that for a grouplike element $e^{\alpha}$, and any polynomial $P(x)\in
 \mathbb{C}[x_{1},x_{2},\dots]$ we have
 \begin{equation}
 e^{Q_1}(e^{\alpha}P(x))=e^{\alpha}\bullet  P(x).
 \end{equation}
 To that end,
 \begin{align*}
   &Q_1(e^{\alpha}P(x))=\big(\sum _{i=1}^{k}\sum_{m\ge 1} b_{im} \frac{\partial}{\partial \alpha_i} \frac{\partial}{\partial x_{m}}+ \sum_{m\ge 1} \sum _{i=1}^{k}c_{mi} \frac{\partial}{\partial x_{m}} \frac{\partial}{\partial \alpha_i}\big)(e^{\alpha}P(x))=\\
  &=e^{\alpha}\big(\sum _{i=1}^{k}m_i(\sum_{m\ge 1}
   b_{im}\frac{\partial}{\partial x_{m}}+ \sum_{m\ge 1} c_{mi}
   \frac{\partial}{\partial x_{m}})\big)(P(x)),
 \end{align*}
 and so
 \begin{align*}
   &e^{Q_1}(e^{\alpha}P(x))=
   e^{\sum _{i=1}^{k} m_i\alpha_i}e^{\big(\sum _{i=1}^{k}m_i(\sum_{m\ge
       1} b_{im}\frac{\partial}{\partial x_{m}}+
     \sum_{m\ge 1} c_{mi} \frac{\partial}{\partial x_{m}})\big)}(P(x))=\\
   &=\prod_{i=1}^{k}\big((e^{\alpha _i})^{m_i}(e^{\sum_{m\ge 1}
     b_{im}\frac{\partial}{\partial x_{m}}})^{m_i}(e^{\sum_{m\ge 1}
     c_{m i}\frac{\partial}{\partial x_{m}}})^{m_i}P(x),
 \end{align*}
 which, using \eqref{eq:PoloCoprod}, is precisely the  $\bullet$ product
 \begin{equation*}
 e^{\alpha}(P(x))^{\prime} s(e^{\alpha}\ten (P(x))^{\prime\prime}),
 \end{equation*}
 where $s$ is the symmetrization bicharacter
 \[
 s(e^{\alpha}\ten P(x))=r(e^{\alpha}\ten (P(x))^{\prime})r((P(x))^{\prime\prime}\ten e^{\alpha}).
\]
 Thus we have according to Lemma \ref{lem:EQisHomom}
 \begin{equation*}
   e^{\mathbf{Q}}(e^{\alpha}P(x))
   =e^{Q_1}\big(\EQ(e^{\alpha})\EQ(P(x))\big)=\EQ(e^{\alpha})\bullet \EQ(P(x))=\EQ(e^{\alpha}P(x)).
 \end{equation*}
\end{proof}
In the paragraph before Theorem \ref{thm:e^QisHomom} we introduced the
notion of a $\bullet$ polynomial $P^{\bullet}$ corresponding to $P\in
V_{0}$. In a similar way we define the $\bullet$ element $a^{\bullet}$
for $a\in V$ as follows. We write $a=e^{\alpha}P$, and put
\[
a^{\bullet}=(e^{\alpha})^{\bullet}\bullet P^{\bullet},
\]
where $(e^{\alpha})^{\bullet}=e^{\alpha}r(e^{\alpha}\ten
e^{\alpha})$. Here we have fixed a bicharacter $r$ with symmetrization $s$,
and the corresponding product $\bullet=\bullet_{s}$.

Thus we obtain the following generalization of Theorem
\ref{thm:e^QisHomom}:

\begin{thm}\label{Thm:GroupLike_EQ_PRop}
  Let $V=\mathbb{C}(G)\ten \mathbb{C}[x_{1},x_{2},\dots]$,
  $G=\mathbb{Z}[\alpha_1, \alpha_2, \dots , \alpha_k]$, with
  bicharacter $r$, taking complex values on grouplike elements, with
  symmetrization $s=r\circ r^{t}$ and associated $\bullet=\bullet_{s}$.  Let
  $\mathbf{Q}$ be the quadratic operator \eqref{eq:6}. Then for $a\in
  V$ we have
\[
e^{\mathbf{Q}}(a)=\EQ_{r}(a)=r(a^{\prime}\ten
a^{\prime\prime})a^{\prime\prime\prime}.
\]
Moreover
\[
e^{\mathbf{Q}}(a)=a^{\bullet},
\]
and $e^{\mathbf{Q}}$ is a homomorphism from $(V\ten A,\cdot)\to(V\ten
A,\bullet)$.
\end{thm}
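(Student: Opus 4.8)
The plan is to assemble the three assertions from results already in hand, since the genuine computation has all been front-loaded into Lemma \ref{lem:eQis EQ2}. For the first identity $e^{\mathbf{Q}}(a)=\EQ_{r}(a)=r(a^{\prime}\ten a^{\prime\prime})a^{\prime\prime\prime}$ I would simply invoke Lemma \ref{lem:eQis EQ2}: under the standing hypothesis that $r$ takes complex values on grouplikes, the coefficients $a_{ij}$ in \eqref{eq:6} exist in $\mathbb{C}$ (cf. Remark \ref{remark:reasonForLog}), the quadratic operator $\mathbf{Q}$ is well defined, and that lemma says precisely that $\mathbf{Q}$ is a logarithm of $\EQ_{r}$ on all of $V$.

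With $e^{\mathbf{Q}}=\EQ_{r}$ established as linear operators on $V_{A}$, the homomorphism property is then immediate from Lemma \ref{lem:EQisHomom}, which asserts exactly that $\EQ_{r}$ carries the standard product to the $\bullet_{s}$ product for the symmetrization $s=r\circ r^{t}$. So $e^{\mathbf{Q}}$ is a homomorphism from $(V\ten A,\cdot)$ to $(V\ten A,\bullet)$ with no further work.

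The one assertion still needing a short verification is $e^{\mathbf{Q}}(a)=a^{\bullet}$. I would prove it on the natural basis of $V=\mathbb{C}(G)\ten\mathbb{C}[x_{1},x_{2},\dots]$, writing a typical basis element as $a=e^{\alpha}P$ with $e^{\alpha}$ grouplike and $P$ a monomial in the $x_{i}$, and then extend by linearity. The chain of equalities inside the proof of Lemma \ref{lem:eQis EQ2} already yields $e^{\mathbf{Q}}(e^{\alpha}P)=\EQ(e^{\alpha})\bullet\EQ(P)$, so it remains only to match the two factors against the definition of $a^{\bullet}$: the defining formula for $\EQ_{r}$ gives $\EQ(e^{\alpha})=e^{\alpha}r(e^{\alpha}\ten e^{\alpha})=(e^{\alpha})^{\bullet}$, while Theorem \ref{thm:e^QisHomom} gives $\EQ(P)=P^{\bullet}$. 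Hence $e^{\mathbf{Q}}(e^{\alpha}P)=(e^{\alpha})^{\bullet}\bullet P^{\bullet}$, which is by definition $a^{\bullet}$.

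I do not expect a real obstacle; the analytic content lives entirely in Lemma \ref{lem:eQis EQ2}. The only point needing a little care is the bookkeeping in this last step: one must note that $a\mapsto a^{\bullet}$ is by construction the linear extension of its definition on the basis $e^{\alpha}\cdot(\text{monomial in the }x_{i})$, and that both $e^{\mathbf{Q}}$ and $\EQ_{r}$ are linear, so that the identity verified on basis elements propagates to all of $V$.
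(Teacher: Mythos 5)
Your proposal is correct and matches the paper's (implicit) proof: the paper states this theorem as a direct consequence of Lemma \ref{lem:eQis EQ2} (whose proof already ends with $e^{\mathbf{Q}}(e^{\alpha}P(x))=\EQ(e^{\alpha})\bullet\EQ(P(x))=\EQ(e^{\alpha}P(x))$), of Lemma \ref{lem:EQisHomom} for the homomorphism property, and of the definition $a^{\bullet}=(e^{\alpha})^{\bullet}\bullet P^{\bullet}$ together with Theorem \ref{thm:e^QisHomom} for $\EQ(P)=P^{\bullet}$. Your assembly, including the identification $\EQ(e^{\alpha})=e^{\alpha}r(e^{\alpha}\ten e^{\alpha})=(e^{\alpha})^{\bullet}$ and the extension by linearity from basis elements, is exactly the intended argument.
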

\begin{example}
In particular, from an element $e^{\alpha _i}$  in $V$ we obtain $(e^{\alpha _i})^{\bullet}$:
\begin{equation}
  (e^{\alpha _i})^{\bullet} =e^{\alpha _i}r(e^{\alpha _i}\ten e^{\alpha _i}) =e^{\alpha _i}e^{a_{i i}}=e^{\mathbf{Q}}(e^{\alpha _i})
\end{equation}
For two independent grouplike elements $e^{\alpha _i}, \ e^{\alpha_j}$ we have:
 \begin{align*}
   &(e^{\alpha _i}e^{\alpha _j})^{\bullet}
   =\EQ(e^{\alpha_i}e^{\alpha_j})
   =e^{\alpha _i}e^{\alpha _j}r(e^{\alpha_i}e^{\alpha _j}\ten e^{\alpha _i}e^{\alpha _j})=
   \\
   &=(e^{\alpha_i})^{\bullet} (e^{\alpha_j})^{\bullet} s(e^{\alpha_i}\ten
   e^{\alpha _j})
   =(e^{\alpha_i})^{\bullet} \bullet (e^{\alpha_j})^{\bullet}
   = e^{\alpha_i}e^{\alpha_j} e^{a_{i i} +a_{i j}
     +a_{j i} +a_{jj}}=
   \\
   &=e^{\mathbf{Q}}(e^{\alpha_i}e^{\alpha_j}).
    \end{align*}
\end{example}
\begin{remark}
  \label{rem:torsion_elements}
We now discuss briefly the effect of torsion elements on the above
results. Let $M$ be a Hopf algebra with torsion elements:
\[
M=\mathbb{C}[G_{\mathrm{Tor}}]\ten
V=\mathbb{C}[\delta_{1},\delta_{2},\dots,\delta_{s},
e^{\pm\alpha_{1}},\dots,e^{\pm\alpha_{l}},x_{1},x_{2},\dots],
\]
with $\delta_{i}$ torsion elements. An element $a$ of $M$ is then of
the form
\[
a=\delta e^{\alpha}P(x).
\]
We fix a symmetric bicharacter on $M$ (taking complex values on
grouplikes), and let $r$ be some bicharacter with $s$ as
symmetrization. We define the $\bullet$ element of $a$ as
\[
a^{\bullet}=(\delta^{\bullet})\bullet (e^{\alpha})^{\bullet}\bullet
P^{\bullet},
\]
where $\delta^{\bullet}=\delta r(\delta \ten\delta)$. (So
$\delta^{\bullet}$ differs from $\delta$ by a root of unity.)

We have then, just as before,
\[
\EQ_{r}(a)=a^{\bullet}, \quad
\EQ_{r}(ab)=(a^{\bullet})\bullet(b^{\bullet}).
\]
If we want to write $\EQ$ in terms of a quadratic differential
operator we have, for $a=\delta e^{\alpha}P(x)$
\[
\EQ_{r}(a)=(\delta^{\bullet})\bullet e^{Q_{V}}(e^{\alpha}P(x)).
\]
Here $\mathbf{Q}_{V}$ is the differential operator \eqref{eq:6}
constructed from the restriction of the bicharacter $r$ defined on
$M=\mathbb{C}[G_{\mathrm{Tor}}]\ten V$ to $V$.
\end{remark}


\section{The Frenkel-Lepowsky-Meurman Example}
\label{sec:The Frenkel-Lepowsky-Meurman Example}

For the reader's convenience we explain in this last section why the
operator $e^{\Delta_{z}}$ of \cite{MR996026}, which plays a prominent
role in the construction of twisted modules over a lattice vertex
algebra, is a special case of the operator we call $e^{\mathbf{Q}}$.

Recall the torsion free commutative and cocommutative Hopf algebra
$V$, where
\[
V=\mathbb{C}[G]\ten V_{0}=\mathbb{C}[G]\ten
\mathcal{U}(L)=\mathbb{C}[e^{\pm\alpha_{1}}.\dots,
e^{\pm\alpha_{\ell}}]\ten \mathbb{C}[x_{1},\dots, x_n, \dots ],
\]
studied in Section \ref{sec:NonTrivGrpLike}. Until now the Abelian
group $G$ and the Abelian Lie algebra $L$ were independent. In the
application to vertex algebras this is no longer true: $L$ is in fact
constructed from the group $G$, by use of an  extra structure.

One starts with a lattice $Q$, i.e., a free Abelian group
$Q=\oplus_{i=1}^{\ell}\mathbb{Z}\alpha_{i}$ equipped with a symmetric bilinear
form $(\alpha,\beta)\mapsto \langle
\alpha\mid\beta\rangle\in\mathbb{C}$. We will assume that the bilinear
form is nondegenerate. Then, in order to construct $L$ (and hence
$V_{0}$), we complexify the lattice: Define
\[
\mathfrak h=\mathbb{C}\ten_{\mathbb{Z}} Q.
\]
Choose an orthonormal basis $h^{s}, s=1,2,\dots,\ell$ for
$\mathfrak{h}$,
\[
\langle h^{s}\mid h^{t}\rangle=\delta_{st}.
\]
Then we let $L$ be the Abelian Lie algebra with basis $h^{s}(-n)$,
$1\le s\le\ell$, $n\in \mathbb{N}$, and
$V_{0}=\mathcal{U}(L)=\mathbb{C}[h^{s}(-n)]$. (So $V_{0}$ has now the
$h^{s}(-n)$,
$1\le s\le\ell$, $n\in \mathbb{N}$ as generating countable set of primitive elements,
instead of the $x_{n}$ as we used before.) We then let
$V=\mathbb{C}[Q]\ten V_{0}$.

On $V$ we have operators $\frac{\partial}{\partial h^{s}(-n)}$ and
$\frac{\partial}{\partial \alpha_{i}}$, the ingredients of the
quadratic differential operator \eqref{eq:6}. In the vertex algebra
literature it is usual to introduce notations
\begin{equation}
h^{s}(n)=n \frac{\partial}{\partial h^{s}(-n)},\quad 1\le s\le\ell,
n>0, \ n\in \mathbb{N}, \label{eq:10}
\end{equation}
and for $n=0$
\begin{equation}
h^{s}(0)=\sum_{i=0}^{\ell}\langle h^{s}\mid \alpha_{i}\rangle
\frac{\partial}{\partial \alpha_{i}}.\label{eq:11}
\end{equation}
Clearly $\{h^{s}(0)\}$ is another basis of the space of derivations of
$V$ spanned by the $\frac{\partial}{\partial \alpha_{i}}$ (here we use the assumption that the bilinear form is nondegenerate). Hence we
can compactly write an alternative "Heisenberg" form (see Appendix \ref{sec:NormalOrdProd} for the reason for this name) of a  quadratic differential operator \eqref{eq:6} as
\begin{equation}
\mathbf{Q}=\sum_{s,t=1}^{\ell}\sum_{m,n=0}^{\infty}c_{mn}^{st}h^{s}(m)h^{t}(n),\label{eq:7}
\end{equation}
where the coefficients $c_{mn}^{st}$ are expressed (invertibly) in terms of values
of some bicharacter as in \eqref{eq:4},
$r(e^{\alpha_{i}}\ten e^{\alpha_{j}}), r(e^{\alpha_{i}}\ten
h^{s}(-n)),$ etc. We will give explicit formulas in a special case below.

So it remains to choose a bicharacter, or, equivalently, to choose the
constants $c_{mn}^{st}$. Particularly nice formulas arise when the
constants $c_{mn}^{st}$ are independent of $s$ and $t$ (as is the case in  \cite{MR996026}), so that we
obtain
\begin{equation}
Q=\Delta_{z}=\sum_{s,t=1}^{\ell}\sum_{m,n=0}^{\infty}c_{mn}h^{s}(m)h^{t}(n),\label{eq:8}
\end{equation}
and the quadratic differential operator of interest in the theory of
twisted modules is specified by choosing a generating series for the
$c_{mn}$ as
\begin{equation}
\sum_{m,n=0}^{\infty}c_{mn}x^{m}y^{n}=-\log\frac{\sqrt{1+\frac{x}{z}}+\sqrt{1+\frac{y}{z}}}{2}.
\label{eq:12}
\end{equation}
Here one expands the right-hand-side as a Maclaurin power series in the variables $x$ and $y$, treating $z$ as a parameter. Note that in this example the choice of the target algebra for the bicharacter $r: V\ten V\to A$ is $A=\mathbb{C}[\frac{1}{z}]$, hence the notation $\Delta_{z}$.

Of course, the substitutions \eqref{eq:10}, \eqref{eq:11} and the
choice of generating series \eqref{eq:12} are not well motivated from
the Hopf algebraic point of view. See the original \cite{MR996026} for
the vertex algebraic context, and \cite{MR2681778} for another
approach.

If one wants to have explicit formulas for the bicharacter that
corresponds to the quadratic differential operator  we need to compare the two forms of the operator $\mathbf{Q}$: the \eqref{eq:8} and  \eqref{eq:6}. We are
now using as primitive elements, instead of $x_{n}$, the elements
$h^{s}(-n)$, so that instead of the form \eqref{eq:6} we get  the expression
\begin{align}
  \label{eq:20}
  \mathbf{Q}&=\sum _{i, j=1}^{\ell} a_{i j} \frac{\partial}{\partial
    \alpha_i}\frac{\partial}{\partial \alpha_j}
  +\sum_{i=1}^{\ell}\sum_{m\ge 1} b_{im}^j \frac{\partial}{\partial \alpha_i}
  \frac{\partial}{\partial h^j(-m)} +\\ \notag
  &\quad + \sum_{m\ge 1} \sum
  _{i=1}^{\ell}c_{mi}^j \frac{\partial}{\partial h^j(-m)}
  \frac{\partial}{\partial \alpha_i}
  +\sum _{i,j=1}^{\ell}\sum_{m,n\ge1}q_{mn}^{i j}\frac{\partial^{2}}{\partial
    h^i(-m)\partial h^{j}(-n)},
\end{align}
By comparing the two forms of $\mathbf{Q}$, \eqref{eq:8} and  \eqref{eq:20}, and using the orthonormality of the basis $h^{i}$, we have
\begin{align*}
& a_{i j}=c_{00}\langle \alpha_{i}  | \alpha_{j} \rangle, \quad \text{for any} \quad i, j=1, \dots k\\
& mb_{im}^j= c_{0 m} \sum _{s=1}^k  \langle  h^{s} | \alpha_{i}  \rangle, \quad \text{for any} \quad j=1, \dots k,\quad m\in \mathbb{N},\\
& mc_{mi}^j= c_{m 0} \sum _{s=1}^k  \langle  \alpha_{i} | h^{s}  \rangle, \quad \text{for any} \quad j=1, \dots k \quad m\in \mathbb{N}.
\end{align*}
Thus the corresponding bicharacter on the grouplike elements from Lemma \ref{lem:eQis EQ2}  simplifies to:
\begin{equation}
\label{eq:GroGroLat}
r(e^{\alpha }\ten e^{\beta})=(e^{c_{00}})^{\langle \alpha  | \beta  \rangle}.
\end{equation}
In order to find similar "lattice" formulas for the rest of the
bicharacters from Lemma \ref{lem:eQis EQ2}, we turn to a basis
typically used in the theory of vertex algebras. We consider the
following degree 1 elements  in the polynomial algebra ${V}_0$. For each $\alpha_i \in
L$, let
\begin{equation}
  \alpha_i(-m) =m\sum_{s=1}^{\ell}\langle h^{s}  | \alpha_i  \rangle
  h^s(-m),
  \quad i=1, \dots , k; \quad m\in \mathbb{N},
\end{equation}
\begin{lem}
\label{lem:Natbasis}
The  elements $\alpha_i(-m)$ are primitive for any $i=1, \dots , k; \ m\in \mathbb{N}$. For these primitive elements  the bicharacters from
Lemma \ref{lem:eQis EQ2} assume the form:
\begin{align}
\label{eq:GroPrimLat1}
&r(e^{\alpha}\ten \alpha_{i}(-m))=\langle \alpha  | \alpha _i  \rangle c_{0 m}\\
\label{eq:GroPrimLat2}
&r(\alpha_{i}(-m)\ten e^{\alpha})=\langle \alpha_i | \alpha \rangle c_{m 0}\\
\label{eq:GroPrimLat3}
&r(\alpha_{i}(-m) \ten \alpha_{j}(-n)) = \langle \alpha_i  | \alpha ^j  \rangle c_{mn}
\end{align}
\end{lem}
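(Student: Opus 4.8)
The plan is to reduce every evaluation to the values of the bicharacter on the generating set $\{e^{\alpha_j}\}\cup\{h^s(-m)\}$ recorded in \eqref{eq:4}, and then to collapse the resulting double sums to the diagonal using the orthonormality $\langle h^s\mid h^t\rangle=\delta_{st}$. I would first dispose of the primitivity claim, which is immediate: for fixed $i$ and $m$ the element $\alpha_i(-m)=m\sum_{s=1}^{\ell}\langle h^s\mid\alpha_i\rangle\,h^s(-m)$ is a finite $\mathbb{C}$-linear combination of the primitive generators $h^s(-m)$, and since $\del$ is linear and the primitive elements form a linear subspace, applying $\del$ termwise yields $\del(\alpha_i(-m))=\alpha_i(-m)\ten 1+1\ten\alpha_i(-m)$.

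For the three evaluation formulas I would exploit two structural facts. First, $r$ is by definition a linear map $V\ten V\to A$, hence bilinear, so I may pull the scalars $m\langle h^s\mid\alpha_i\rangle$ and $n\langle h^t\mid\alpha_j\rangle$ out of each slot; for instance \eqref{eq:GroPrimLat3} reduces to $mn\sum_{s,t}\langle h^s\mid\alpha_i\rangle\langle h^t\mid\alpha_j\rangle\,r(h^s(-m)\ten h^t(-n))$. Second, for a grouplike argument $e^{\alpha}$ with $\alpha=\sum_k n_k\alpha_k$ paired against a primitive $x$, the bicharacter axiom $r(ab\ten c)=\sum r(a\ten c^{\prime})r(b\ten c^{\second})$ of Definition \ref{defn:bich} together with $\del(x)=x\ten 1+1\ten x$ and $r(\,\cdot\,\ten 1)=\eta$ gives the additivity $r(e^{\beta}e^{\gamma}\ten x)=r(e^{\beta}\ten x)+r(e^{\gamma}\ten x)$; iterating (and using $r(e^{-\alpha_k}\ten x)=-r(e^{\alpha_k}\ten x)$, which follows by pairing $e^{\alpha_k}e^{-\alpha_k}=1$ against $x$) yields $r(e^{\alpha}\ten x)=\sum_k n_k\,r(e^{\alpha_k}\ten x)$. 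This linear dependence on $\alpha$ is exactly what surfaces as the factor $\langle\alpha\mid\alpha_i\rangle$ in \eqref{eq:GroPrimLat1} and \eqref{eq:GroPrimLat2}.

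Next I would substitute the generator values from \eqref{eq:4}, as expressed through the $c_{mn}$ by comparing the two presentations \eqref{eq:8} and \eqref{eq:20} of $\mathbf{Q}$ (reading off coefficients with $h^s(n)=n\,\partial/\partial h^s(-n)$ and $h^s(0)=\sum_i\langle h^s\mid\alpha_i\rangle\,\partial/\partial\alpha_i$). After substitution each formula carries a weighted double sum $\sum_{s,t}\langle h^s\mid\alpha_i\rangle\langle h^t\mid\alpha_j\rangle(\cdots)$, and the decisive step is the orthonormal collapse: the Kronecker delta from the $h$-normalization forces $t=s$, and $\sum_s\langle h^s\mid\alpha_i\rangle\langle h^s\mid\alpha_j\rangle=\langle\alpha_i\mid\alpha_j\rangle$ (likewise $\sum_s\langle h^s\mid\alpha\rangle\langle h^s\mid\alpha_i\rangle=\langle\alpha\mid\alpha_i\rangle$), converting the basis-dependent expressions into the intrinsic pairings appearing on the right-hand sides.

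The main obstacle I anticipate is purely bookkeeping but genuinely delicate. One must track the factors of $m$ and $n$ arising both from the definition $\alpha_i(-m)=m\sum_s\langle h^s\mid\alpha_i\rangle h^s(-m)$ and from the normalization $h^s(n)=n\,\partial/\partial h^s(-n)$, and verify they cancel against the factors inside the generator values read off from \eqref{eq:20}, so that precisely $c_{0m}$, $c_{m0}$, and $c_{mn}$ survive. Equally, one must confirm that the off-diagonal ($s\neq t$) contributions vanish under orthonormality rather than merely being suppressed; this diagonalization is what turns a sum over the full orthonormal basis into the single pairing $\langle\cdot\mid\cdot\rangle$, and it is the crux of the computation.
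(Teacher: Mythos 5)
Your proposal matches the paper's proof essentially step for step: primitivity of $\alpha_i(-m)$ by linearity of $\del$ on the span of the primitive generators $h^s(-m)$; bilinearity of $r$ to reduce each evaluation to the generator values $b^s_{jm}$, $c^s_{mj}$, $q^{st}_{mn}$ read off by comparing the two forms \eqref{eq:8} and \eqref{eq:20} of $\mathbf{Q}$; the additivity $r(g_1g_2\ten x)=r(g_1\ten x)+r(g_2\ten x)$ for grouplikes paired against a primitive, used to pass from the generators $e^{\alpha_j}$ to a general $e^{\alpha}$; and the orthonormal collapse $\sum_s\langle h^s\mid\alpha_i\rangle\langle h^s\mid\alpha_j\rangle=\langle\alpha_i\mid\alpha_j\rangle$, with \eqref{eq:GroPrimLat2} and \eqref{eq:GroPrimLat3} following in the same way. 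If anything, you are slightly more explicit than the paper, which asserts the grouplike additivity (and the sign rule for $e^{-\alpha_k}$) without derivation and writes out only the computation behind \eqref{eq:GroPrimLat1}.
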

\begin{proof}
Since $h^s(-m)$ are primitive elements for any $s=1, \dots ,k, \ m\in
\mathbb{N}$, it follows that $\alpha_i(-m)$ are primitive for any
$i=1, \dots ,k, \ m\in \mathbb{N}$.
Using the property
\begin{align}
  r(e^{\alpha_{j}}\ten \alpha_i(-m)) & =\sum_{s=1}^{\ell}\langle h^{s}
  | \alpha_i  \rangle r(e^{\alpha_{j}}\ten m h^s(-m))
  = \sum_{s=1}^{\ell}\langle h^{s}  | \alpha_i  \rangle m b_{jm}^s =\notag\\
\label{eq:Gr-Prim}
& =c_{0 m} \sum_{s=1}^{\ell} \sum _{l=1}^k \langle h^{s}  | \alpha_i
\rangle    \langle  h^{l} | \alpha_{j} \rangle
=c_{0 m}\langle \alpha_{j}  | \alpha_i  \rangle.
\end{align}
 For any primitive element $x$ and
grouplike elements $g_1, g_2$ we have
\begin{equation}
r(g_1g_2\ten x)=r(g_1\ten x) +r(g_2\ten x).
\end{equation}
Hence from \eqref{eq:Gr-Prim} the equality \eqref{eq:GroPrimLat1}
follows immediately. Equalities \eqref{eq:GroPrimLat2} and
\eqref{eq:GroPrimLat3} follow similarly.
\end{proof}

As was mentioned above, I. Frenkel, J. Lepowsky and A. Meurman defined
the quadratic differential operator $\Delta_{z}$ by choosing the
generating function \eqref{eq:12}. As shown above in equation
\eqref{eq:GroGroLat} and lemma \ref{lem:Natbasis}, in the natural
basis $\alpha_i(-m), \ i=1, \dots , k; \ m\in \mathbb{N}$ the
bicharacter $r$ is explicit and simple. Thus we can complete the
picture and define a $\bullet _s$-product, together with the
bicharacter map $\EQ_r$ (the bicharacter $s$ being the symmetrization
of $r$), so that the map $e^{\Delta_{z}}=e^{\mathbf{Q}}=\EQ_r$.

\begin{example}
\label{example:TheBigEx}
Let us finish with calculating some examples of the $\bullet$ product
on $V=\mathbb{C}(L)\ten {V}_0$ in the specific case outlined
above.

We have the generating series \eqref{eq:12} for the coefficients
$c_{mn}$ of the quadratic differential operator $\mathbf{Q}$.
The first few terms of the series expansion are
\begin{equation}
c_{00}+c_{01}y+c_{10}x+c_{11}xy+\dots=-\frac{1}{4z}(x +y) +\frac{3}{32z^2}(x^2 +y^2) +\frac{1}{16z^2}xy +\dots .
\end{equation}
In particular note that $c_{00}=0$ in this case. Thus we have
\begin{align*}
&r(e^{\alpha }\ten e^{\beta})=(e^{c_{00}})^{\langle \alpha  | \beta  \rangle}=1,\\
&r(e^{\alpha }\ten \alpha_{i}(-1))=\langle \alpha  | \alpha _i  \rangle c_{0 1}=-\frac{\langle \alpha  | \alpha _i  \rangle}{4z},\\
&r(\alpha_{i}(-1)\ten e^{\alpha})=\langle \alpha _i | \alpha \rangle c_{1 0}=-\frac{\langle \alpha _i | \alpha   \rangle}{4z},\\
&r(\alpha_{i}(-1) \ten \alpha_{j}(-1)) = \langle \alpha_i  | \alpha_j  \rangle c_{11}=\frac{\langle \alpha_i  | \alpha_j  \rangle}{16z^2}.
\end{align*}
Hence for the symmetrization bicharacters we have
\begin{align*}
  &s(e^{\alpha }\ten e^{\beta})=r(e^{\alpha }\ten e^{\beta})r(e^{\beta }\ten e^{\alpha})=1,\\
  &s(e^{\alpha }\ten \alpha_{i}(-1))=r(e^{\alpha }\ten \alpha_{i}(-1))
  +r(\alpha_{i}(-1)\ten e^{\alpha})
  =-\frac{\langle \alpha  | \alpha _i  \rangle}{2z},\\
  &s(\alpha_{i}(-1) \ten \alpha_{j}(-1)) = r(\alpha_{i}(-1) \ten
  \alpha_{j}(-1)) +r(\alpha_{i}(-1) \ten \alpha_{j}(-1))=\\
&\qquad\qquad\qquad\qquad=\frac{\langle \alpha_i
    | \alpha_{j} \rangle}{8z^2}.
\end{align*}
We have then
\begin{align*}
  e^{\mathbf{Q}}(e^{\alpha_{i}}) & =\EQ_r(e^{\alpha_{i}})=(e^{\alpha_{i}})^\bullet =e^{\alpha_{i}}\\
  e^{\mathbf{Q}}(e^{\alpha_{i}}e^{\alpha_{j}}) & =\EQ_r(e^{\alpha_{i}}e^{\alpha_{j}})=(e^{\alpha_{i}}e^{\alpha_{j}})^\bullet =(e^{\alpha_{i}})^\bullet  \bullet (e^{\alpha_{j}})^\bullet =e^{\alpha_{i}} \bullet e^{\alpha_{j}}= \\
  & =e^{\alpha_{i}}e^{\alpha_{j}} s(e^{\alpha_{i}} \ten e^{\alpha
   _{j}})=e^{\alpha_{i}}e^{\alpha_{j}}.
\end{align*}
Thus in this case we have for any $\alpha =\sum _{i=1}^k m_i \alpha_i, \ m_i\in \mathbb{Z}$
\begin{equation*}
e^{\mathbf{Q}}(e^{\alpha })=\EQ_r(e^{\alpha })=(e^{\alpha })^\bullet =e^{\alpha },
\end{equation*}
Further,
\begin{align*}
  e^{\mathbf{Q}}(\alpha_i(-1))
  & =\EQ_r(\alpha_i(-1))=(\alpha_i(-1))^\bullet =\alpha_i(-1)\\
  e^{\mathbf{Q}}(e^{\alpha }\alpha_i(-1))
  & =\EQ_r(e^{\alpha }\alpha_i(-1))=(e^{\alpha }\alpha_i(-1))^\bullet =(e^{\alpha })^{\bullet }\bullet (\alpha_i(-1))^\bullet = \\
  &=(e^{\alpha })\bullet (\alpha_i(-1)) =e^{\alpha }\alpha_i(-1) +s(e^{\alpha }\ten \alpha
 _i(-1))=\\
 &\qquad =e^{\alpha }\alpha_i(-1)-\frac{\langle \alpha
    | \alpha _i  \rangle}{2z} \\
  e^{\mathbf{Q}}((\alpha_i(-1))^2)& =\EQ_r((\alpha_i(-1))^2)=(\alpha
 _i(-1))^\bullet  \bullet (\alpha_i(-1))^\bullet = \\
  &= (\alpha_i(-1))
  \bullet (\alpha_i(-1))=(\alpha_i(-1))^2 +s(\alpha_i(-1) \ten \alpha_j(-1)) = \\
  & =(\alpha_i(-1))^2 +\frac{\langle \alpha_i | \alpha_j
    \rangle}{8z^2}.
\end{align*}
Even though on lower degree products it is about as easy to
calculate the $\bullet$ product or the action of $e^{\mathbf{Q}}$, on higher
products it is much easier to calculate the $\bullet$ products (of
course, Theorem \ref{Thm:GroupLike_EQ_PRop} assures that we will get
the same result). For instance:
\begin{align*}
e^{\mathbf{Q}}(e^{\alpha }(\alpha_i(-1))^2)=
&(e^{\alpha }(\alpha_i(-1))^2)^\bullet=(e^{\alpha })^{\bullet }\bullet
((\alpha_i(-1))^2)^\bullet =\\
&=e^{\alpha }\bullet \big((\alpha_i(-1))^2+\frac{\langle \alpha_i  | \alpha_j  \rangle}{8z^2}\big)=e^{\alpha }\bullet (\alpha_i(-1))^2 +e^{\alpha }\frac{\langle \alpha_i  | \alpha_j  \rangle}{8z^2}.
\end{align*}
Since $e^{\alpha }$ is grouplike, we have
\begin{equation*}
s(e^{\alpha }\ten (\alpha_i(-1))^2)=(s(e^{\alpha }\ten (\alpha_i(-1))))^2.
\end{equation*}
From
\begin{align*}
\del ((\alpha_i(-1))^2) =(\alpha_i(-1))^2\ten 1 +2\alpha_i(-1)\ten \alpha_i(-1) +1\ten (\alpha_i(-1))^2
\end{align*}
we have
\begin{align*}
  e^{\alpha }\bullet (\alpha_i(-1))^2& =e^{\alpha
  }(\alpha_i(-1))^2 +2e^{\alpha }(\alpha_i(-1))s(e^{\alpha }\ten
  \alpha_i(-1)) +\\
&\qquad\qquad +e^{\alpha }s(e^{\alpha }\ten (\alpha_i(-1))^2)=\\
  & =e^{\alpha }\Big( (\alpha_i(-1))^2 -2\alpha_i(-1)\frac{\langle
    \alpha | \alpha _i \rangle}{2z} +\frac{\langle \alpha | \alpha _i
    \rangle ^2}{4z^2} \Big).
\end{align*}
So
\begin{align*}
  e^{\mathbf{Q}}(e^{\alpha }(\alpha_i(-1))^2)=e^{\alpha }\Big( (\alpha
 _i(-1))^2 -2\alpha_i(-1)\frac{\langle \alpha | \alpha _i
    \rangle}{2z} +\frac{\langle \alpha | \alpha _i \rangle ^2}{4z^2}
  +\frac{\langle \alpha_i | \alpha_j \rangle}{8z^2}\Big)
\end{align*}
The rest of the  $\bullet$ products are similarly obtained from the Taylor expansion.
\end{example}

\appendix

\section{Normal ordered products for the Heisenberg algebra}
\label{sec:NormalOrdProd}

In this Appendix we recall the notion of normal ordered
products of fields in the case of the Heisenberg algebra. The notion
of normal ordered products has long been very common in the physics
literature on conformal field theory, and has been introduced in the
mathematical literature (in greater generality than presented here) by
works like \cite{MR996026}, \cite{MR1021978} and others.

We start with the setup of Section \ref{sec:The
  Frenkel-Lepowsky-Meurman Example}, where we have a lattice $Q$ with
complexification $\mathfrak h$, but with the simplification that
$Q=\mathbb{Z}\alpha$ is rank 1, so that $\mathfrak h$ has dimension
1. We fix a basis element $h\in\mathfrak h$ (say of unit length) and
simplify the notation and write in this case $x_{n}=h(-n)$, so that we
deal with the polynomial algebra
\[
V_{0}=\mathbb{C}[x_{1},x_{2},\dots].
\]
On $V_{0}$ we have \emph{creation operators}
\[
h_{-n}=\text{multiplication by $x_{n}$,}
\]
and \emph{annihilation operators} (cf., \eqref{eq:10})
\[
h_{n}=n\frac{\partial}{\partial x_{n}}.
\]
We extend the action of creation and annihilation operators to all of \\
$V=\mathbb{C}[Q]\ten V_{0}=\oplus_{n\in\mathbb{Z}}
V_{0}e^{n\alpha}$. We also define (cf., \eqref{eq:11})
\[
h_{0}=\frac{\partial}{\partial \alpha}.
\]
Let $\mathcal{H}$ be the infinite dimensional Lie algebra generated by the operators
$h_n$ for $n\in \mathbb{Z}$, and ${c}$--a central element, satisfying the relations
\begin{equation}
[h_m,h_n]=m\delta _{m+n,0}{c}, \quad m,n\in \mathbb{Z}.
\end{equation}
$\mathcal{H}$ is called the Heisenberg algebra.  It is clear that
$V$ is a representation of $\mathcal{H}$, with the central element $c$ acting as  multiplication by $1$.

We can organize the Heisenberg operators from $\mathcal{H}$ in a
formal series, called \emph{Heisenberg field}:
\begin{equation}
h(z)=\sum_{n\in \mathbb{Z}} h_n z^{-n-1}.
\end{equation}
The indexing is due to the fact that we want the annihilation
operators to be indexed by negative powers of the formal variable $z$,
and creation operators to be indexed by non negative powers of $z$:
\begin{equation}
 h(z)=h_{-}(z)+h_{+}(z),\label{eq:9}
\end{equation}
 where
$h_{-}(z)=\sum_{n\ge 0} h_n z^{-n-1}$ is called \emph{annihilation part} of the Heisenberg field and
$h_{+}(z)= \sum_{n\ge 0} h_{-n-1} z^{n}$---\emph{creation part} of the Heisenberg field.

The product of two Heisenberg fields with the same formal variable
does not make sense, even when it acts on the element $1_{V_{0}}\in V_{0}$: If
one naively was to multiply
\begin{equation*}
  "h(z)h(z) =\sum_{m\in \mathbb{Z}} h_m z^{-m-1} \sum_{n\in \mathbb{Z}} h_{-n} z^{n-1}
  = \sum_{k\in \mathbb{Z}} z^{-k-2}(\sum_{m-n =k} h_m h_{-n} )" ,
\end{equation*}
one has infinite sums as coefficients, for example, the coefficient in
front of $z^{-2}$ for $h(z)h(z) 1_{V_{0}}$ is $\sum_ {m>0} m$.  To
rectify this, following physicists, one introduces the notion of
normal ordered products:
\begin{defn} (\cite{MR996026}, \cite{MR1021978}, \cite{MR1651389}) \textbf{(Normal ordered products)} \\
  First, let\label{defn:normalorderedproduct}
\begin{align*}
  &\colon h_nh_m \colon =h_n h_m \quad \text{if} \ \ m<0 \\
  &\colon h_nh_m \colon =h_m h_n \quad \text{if} \ \ m\ge 0.
\end{align*}
Then define
\begin{equation}\label{eq:14}
\colon h(z) h(z) \colon =\sum_{k\in \mathbb{Z}} z^{-k-2}(\sum_{m-n =k} \colon h_m h_{-n}\colon ),
\end{equation}
called \textbf{normal ordered products of fields}.
\end{defn}

The normal ordered product of the Heisenberg fields has a well defined
action on any element of $V$.

Similarly to the $\bullet$ products, see Example
\ref{Example:succesivebulletproduct}, one can define normal ordered
products of arbitrary number of fields by a consecutive application
from the right:
\begin{equation}\label{eq:13}
\colon h(z) h(z) h(z) \colon =\colon h(z) \colon h(z) h(z) ::.
\end{equation}
Besides the Heisenberg field $h(z)$ one  also considers derivative fields
$\partial^{i}h(z)$, where $\partial=\partial_z$, and define similarly normal ordered product of those fields.

Now we are ready to define the \textbf{vertex algebra state-field
  correspondence}, which is a map from $V_{0}$ to the space of fields
on $V_{0}$ (or on $V$). It is given by
\begin{equation}
  x_{n_{1}}x_{n_{2}}\dots x_{n_{k}} =h_{-n_{1}}h_{-n_{2}}\dots
  h_{-n_{k}} 1_{V_{0}} \mapsto
\colon \frac{\partial ^{n_{1}-1}h(z)}{(n_{1}-1)!} \frac{\partial
  ^{n_{2}-1}h(z)}{(n_{2}-1)!}\dots
\frac{\partial ^{n_{k}-1}h(z)}{(n_{k}-1)!}\colon
\end{equation}
This is in fact a one-to-one map, the inverse being given as following:
\begin{fact}(\cite{MR996026}, \cite{MR1651389})\textbf{(Field-state correspondence)} \\
\label{fact:field-state}
To a normal product of Heisenberg fields one associates back the
product of states given by:
\begin{align*}
  &\colon \frac{\partial ^{n_{1}-1}h(z)}{(n_{1}-1)!} \frac{\partial
    ^{n_{2}-1}h(z)}{(n_{2}-1)!}\dots
\frac{\partial ^{n_{k}-1}h(z)}{(n_{k}-1)!}\colon
\mapsto  \\
&\colon \frac{\partial ^{n_{1}-1}h(z)}{(n_{1}-1)!} \frac{\partial
    ^{n_{2}-1}h(z)}{(n_{2}-1)!}\dots
\frac{\partial ^{n_{k}-1}h(z)}{(n_{k}-1)!}\colon 1_{V_{0}}\arrowvert_{z=0} =
h_{-n_{1}}h_{-n_{2}}\dots h_{-n_{k}}1_{V_{0}} =\\
 & =x_{n_{1}}x_{n_{1}}\dots x_{n_{k}}.
\end{align*}
\end{fact}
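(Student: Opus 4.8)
The plan is to verify the claimed round trip directly, by computing the action of the normal ordered product on the vacuum $1_{V_0}$ and extracting the value at $z=0$. The starting point is the decomposition $h(z)=h_-(z)+h_+(z)$ from \eqref{eq:9} into annihilation and creation parts. The crucial observation is that every annihilation operator $h_n$ with $n\ge 0$ kills $1_{V_0}$: indeed $h_n=n\frac{\partial}{\partial x_n}$ for $n\ge 1$ and $h_0=\frac{\partial}{\partial\alpha}$ all send the constant $1_{V_0}$ to $0$. Since normal ordering is by definition the \emph{naive} reordering that moves all annihilation parts to the right without introducing any commutator corrections (Definition \ref{defn:normalorderedproduct}), expanding the normal ordered product and applying it to $1_{V_0}$ leaves only the single term in which every factor contributes its creation part; every other term ends in an annihilation operator and hence vanishes on $1_{V_0}$.

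First I would record the creation part explicitly. Reindexing $h_+(z)=\sum_{n\ge 0}h_{-n-1}z^n$ and using that $h_{-k}$ is multiplication by $x_k$, one finds that $h_+(z)$ acts as multiplication by the series $\sum_{k\ge 1}x_k z^{k-1}$. Consequently $\frac{1}{(m-1)!}\partial_z^{m-1}h_+(z)$ acts as multiplication by $\frac{1}{(m-1)!}\sum_{k\ge 1}x_k\,\partial_z^{m-1}z^{k-1}$, whose constant term in $z$ is exactly $x_m$, because $\partial_z^{m-1}z^{k-1}$ has a nonzero $z^0$-coefficient precisely when $k=m$, where it equals $(m-1)!$.

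Next I would assemble the product. Because each $\frac{1}{(n_i-1)!}\partial_z^{n_i-1}h_+(z)$ is a multiplication operator, these factors commute, and applied to $1_{V_0}$ they simply multiply together as elements of $V_0[[z]]$. Evaluating a product of power series at $z=0$ gives the product of the constant terms, so by the previous step the result is $\prod_{i=1}^{k} x_{n_i}=x_{n_1}x_{n_2}\cdots x_{n_k}$. Finally, this coincides with $h_{-n_1}h_{-n_2}\cdots h_{-n_k}1_{V_0}$, since each $h_{-n_i}$ is multiplication by $x_{n_i}$; comparing with the state-field correspondence stated just above shows that composing the two maps returns the original state, which is exactly the asserted field-state correspondence and yields the one-to-one claim.

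The main obstacle---more a point requiring care than a genuine difficulty---is justifying that normal ordering produces no commutator terms, so that the reduction to the pure-creation term is exact; this is precisely the symbolic nature of the ordering in Definition \ref{defn:normalorderedproduct}, where factors are rearranged formally rather than through the Heisenberg relations. A secondary point is to make the term-by-term extraction of constant terms rigorous, which is routine once one notes that each $\partial_z^{n_i-1}h_+(z)$ is a well-defined element of $V_0[[z]]$ and that multiplication and evaluation at $z=0$ commute on formal power series.
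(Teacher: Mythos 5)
Your argument is correct, but you should know that the paper does not prove this Fact at all: it is stated as imported from the literature, and the sentence immediately following it defers the proof to \cite{MR996026}, \cite{MR1651389}, \cite{MR2023933}. Your proposal therefore supplies a self-contained verification where the paper gives none, and the verification is sound in this Heisenberg setting. The key points are exactly as you say: every mode $h_n$ with $n\ge 0$, including $h_0=\frac{\partial}{\partial\alpha}$, annihilates $1_{V_0}$; consequently only the pure creation term survives, and since $h_+(z)$ acts as multiplication by $\sum_{k\ge1}x_kz^{k-1}$, the constant term of $\frac{1}{(n_i-1)!}\partial_z^{n_i-1}h_+(z)$ is multiplication by $x_{n_i}$. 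One refinement: because the paper only defines the ordering of two modes and then iterates the multi-field product from the right as in \eqref{eq:13}, your ``expanding the normal ordered product'' step is most cleanly phrased as an induction on $k$, showing $\colon A_1(z)\cdots A_k(z)\colon 1_{V_0}=A_{1,+}(z)\cdots A_{k,+}(z)\,1_{V_0}$, where $A_{i,+}$ denotes the creation part; the induction step uses $A_{1,-}(z)1_{V_0}=0$ together with the inductive hypothesis, and the content is the same as your term-by-term argument. A by-product worth stating explicitly is that no negative powers of $z$ survive, which is precisely the ``evaluation at $z=0$ makes sense'' half of the assertion that the paper attributes to the references; your proof gets it for free, though of course the cited sources establish the correspondence in far greater generality than this one algebra.

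One caution about your appeal to the definition: Definition \ref{defn:normalorderedproduct} as printed appears to have its two cases interchanged (it conditions on the second index $m$ and reads $\colon h_nh_m\colon=h_nh_m$ for $m<0$ and $=h_mh_n$ for $m\ge0$, which places annihilation operators on the \emph{left}; taken literally, the coefficient of $z^{-2}$ in $\colon h(z)h(z)\colon 1_{V_{0}}$ would again be a divergent sum of the type $\sum_{m>0}m$, defeating the purpose of the definition). Your reading---all annihilation parts moved to the right, with no commutator corrections---is clearly the intended convention and the only one under which the Fact holds, but since your crucial reduction invokes the definition verbatim, you should flag explicitly that you are using the standard convention and thereby correcting what is evidently a typo in the stated definition.
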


The fact that the evaluation of the normal product of the fields at
$z=0$ makes sense, and gives back the product of the states, is proved
in many books, see for example \cite{MR996026}, \cite{MR1651389},
\cite{MR2023933}.

Similar to the Heisenberg algebra there is the twisted Heisenberg
algebra:
\begin{defn}(\cite{MR996026}, \cite{MR1272580}, \cite{MR2172171})
  \textbf{(Twisted Heisenberg algebra)}
  Let $\mathcal{H}_{\sfrac{1}{2}}$ be the infinite dimensional Lie
  algebra generated by the operators $h_n$ for $n\in \mathbb{Z}
  +\sfrac{1}{2}$, and $\tilde{c}$--a central element, satisfying the
  relations
\begin{equation}
[h_m,h_n]=m\delta _{m+n,0}\tilde{c}, \quad m,n\in \mathbb{Z}+\sfrac{1}{2}.
\end{equation}
The generators are organized in the \textbf{twisted Heisenberg field}:
\begin{equation}
\tilde{h}(z)=\sum_{n\in \mathbb{Z}+\sfrac{1}{2}} h_n z^{-n-1}.
\end{equation}
\end{defn}

Just as the Heisenberg algebra acts on $V_{0}$ one can define also a
module, say $\tilde V_{0}$, for the twisted Heisenberg algebra, such that the annihilation
operators are still $h_{n}, n>0$ and creation operators
$h_{n},n<0$. Since we have the notion of creation and annihilation
operators we can define normal ordered products of (derivatives of)
the twisted Heisenberg fields. Then we can also define a state-field
correspondence, from states in the (untwisted) space $V_{0}$ to the
twisted fields that act on $\tilde V_{0}$.

One of the questions  we  started in this paper is the
following:\\
\textbf{For a twisted Heisenberg algebra, normal ordered products of
  twisted fields correspond to what products of states (under the
  twisted state field correspondence)? }
\\ I.e., what is the
equivalent of the field-state correspondence \ref{fact:field-state} on
the module $\tilde V_{0}$ for the twisted Heisenberg algebra
$\mathcal{H}_{\sfrac{1}{2}}$? It is obvious one can no longer apply
the evaluation at $z=0$ as in \ref{fact:field-state}.

Nevertheless, we can now formulate an answer to this question, and we leave the proof to the reader familiar with twisted modules of vertex algebras: \\
Let the bicharacter $r$ and its symmetrization $s$ be defined as in
Section \ref{sec:The Frenkel-Lepowsky-Meurman Example}, see Example
\ref{example:TheBigEx}, and consider their inverses as in
\eqref{eq:3}; then
\begin{align*} &\colon \frac{\partial
^{n_{1}-1}\tilde{h}(z)}{(n_{1}-1)!}  \frac{\partial
^{n_{2}-1}\tilde{h}(z)}{(n_{2}-1)!}\dots \frac{\partial
^{n_{k}-1}\tilde{h}(z)}{(n_{k}-1)!}\colon \mapsto
e^{-Q_p}(x_{n_{1}}x_{n_{2}}\dots x_{n_{k}}) =\\ &=\EQ_{r^{-1}}
(x_{n_{1}}x_{n_{2}}\dots x_{n_{k}}) = x_{n_{1}}\bullet
_{s^{-1}} x_{n_{2}}\bullet _{s^{-1}} \dots \bullet _{s^{-1}}
x_{n_{k}}.
\end{align*}
So the normal ordered product of twisted fields corresponds to
the $\bullet$ product on $V_{0}$.

\section{Operator description of the coproduct}
\label{sec:alternativeproof}

We want to introduce an alternative operator description of the coproduct involving grouplike elements, similar to the well known description we used in Equation \eqref{eq:PoloCoprod}. To do that we need expressions involving
  exponentials of $\alpha\in G$ and $\frac{\partial}{\partial
    \alpha_{i}}$. Recall that in Section \ref{sec:PolyAlg} we used expressions
$e^{x_{n}\frac{\partial}{\partial_{x_{n}}}}$ for primitive elements $x_n$; such an expression was
interpreted as a power series $\sum
\frac{1}{s!}(x_{n}\frac{\partial}{\partial_{x_{n}}})^{s}$. (It is a
locally finite infinite order differential operator on
$\mathbb{C}[x_{1},\dots]$.) Now we want to consider the  expression
$e^{\alpha_{i}\frac{\partial}{\partial_{\alpha_{i}}}}$ as an operator  on
$\mathbb{C}[G]$. This can not be interpreted as a power series, as the
powers of $\alpha_{i}$ do not belong to $\mathbb{C}[G]$. However,
$\frac{\partial}{\partial_{\alpha_{i}}}$ is diagonalizable on
$\mathbb{C}[G]$ (and on $\mathbb{C}[G]\ten
\mathbb{C}[x_{1},\dots]$). So on the eigenspace for
$\frac{\partial}{\partial_{\alpha_{i}}}$ with eigenvalue $m_{i}$ the
exponential operator
$e^{\alpha_{i}\frac{\partial}{\partial_{\alpha_{i}}}}$ is just
multiplication by $e^{m_{i}\alpha_{i}}=(e^{\alpha_{i}})^{m_{i}}$.

As in section \ref{sec:PolyAlg} we can use the operators $e^{\alpha_{i}\frac{\partial}{\partial_{\alpha_{i}}}}$ to give a convenient description of the
coproduct of $V$. An element of $V$ is a linear combination of elements of  the form $e^{\alpha}P(x)$,
$\alpha=\sum m_{i}\alpha_{i}$, $P(x)\in V_{0}$. The coproduct of $e^{\alpha}P(x)$ is
\begin{align*}
  \Delta(e^{\alpha}P(x))&=e^{\alpha}\ten e^{\alpha}P(x^{(1)}+
  x^{(2)})=\\
  &=e^{\alpha^{(1)}+\alpha^{(2)}}P(x^{(1)}+x^{(2)})=\\
  &=e^{\sum_i\alpha_{i}^{(1)}\frac{\partial}{\partial{\alpha_{i}^{(2)}}}
    +\sum_{n}x_{n}^{(1)}\frac{\partial}{\partial
      x_{n}^{(2)} }}[e^{\alpha^{(2)}}P(x^{(2)})].
\end{align*}
Here we write $e^{\alpha^{(1)}}$ for $e^{\alpha}\ten 1$, etc. In the
same way for the square of the coproduct:
\begin{equation}
  \Delta^{2}(e^{\alpha}P(x))=
  e^{\sum_i[\alpha_{i}^{(1)}+\alpha_{i}^{(2)}]\frac{\partial}{\partial{\alpha_{i}^{(3)}}}
    +\sum_{n}[x_{n}^{(1)}+x_{n}^{(2)}]\frac{\partial}{\partial
      x_{n}^{(3)} }}[e^{\alpha^{(3)}}P(x^{(3)})].
\label{eq:5B}
\end{equation}
Now we want to show how we can use this alternative operator description of the coproduct to give an alternative proof of Lemma \ref{lem:eQis EQ2}: We recall it for convenience:
\begin{lem}
  Let
  $V=\mathbb{C}[e^{\pm\alpha_{1}},\dots,e^{\pm\alpha_{\ell}},x_{1},x_{2},\dots]$
  as before. Let $A$ be a commutative $\mathbb{C}$-algebra and $r$ an
  $A$-valued bicharacter with values on generators
  \begin{equation}\label{eq:4B}
\begin{aligned}
&r(e^{\alpha _i}\ten e^{\alpha _j})=e^{a_{i j}}\in \mathbb{C}\\
&r(e^{\alpha _i}\ten x_m)=b_{i m}\in A\\
&r(x_m \ten e^{\alpha _i})=c_{m i}\in A\\
&r(x_i\ten x_m) =q_{mn}\in A
\end{aligned}
\end{equation}
Then the map $\EQ\colon V \to V_{A}, \ \ a\mapsto r(a^{\prime}\ten
a^{\prime\prime})a^{\prime\prime\prime}$ is the exponential of an infinite
  order quadratic operator $\mathbf{Q}$
  \begin{equation}
  \EQ_r(m)=e^{\mathbf{Q}}(m), \ \ m\in V,
  \end{equation}
where $\mathbf{Q}$ is defined by
\begin{multline}\label{eq:6B}
  \mathbf{Q}=\sum _{i, j=1}^{\ell} a_{i j} \frac{\partial}{\partial
    \alpha_i}\frac{\partial}{\partial \alpha_j}+\sum
  _{i=1}^{\ell}\sum_{m\ge 1}b_{im} \frac{\partial}{\partial \alpha_i}
  \frac{\partial}{\partial x_{m}}+ \\+\sum_{m\ge 1} \sum
  _{i=1}^{\ell}c_{mi}\frac{\partial}{\partial x_{m}}
  \frac{\partial}{\partial \alpha_i} +
 \sum_{m,n\ge1}q_{mn}\frac{\partial^{2}}{\partial x_{m}\partial
    x_{n}}.
\end{multline}
\end{lem}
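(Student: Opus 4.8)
The plan is to apply the partial evaluation $(r\otimes\mathrm{id})$ directly to the operator form \eqref{eq:5B} of $\Delta^{2}$, mirroring the computation of Lemma \ref{lem:eQis EQ} but now keeping track of the grouplike variables as well. Writing $a=e^{\alpha}P(x)$ with $\alpha=\sum_i m_i\alpha_i$, I would first record that $\EQ_r(a)=(r\otimes\mathrm{id})\Delta^{2}(a)$, so that by \eqref{eq:5B} the map $\EQ_r$ is computed by evaluating $r$ on the first two tensor slots of
\[
e^{\sum_i[\alpha_i^{(1)}+\alpha_i^{(2)}]\partial_{\alpha_i^{(3)}}+\sum_n[x_n^{(1)}+x_n^{(2)}]\partial_{x_n^{(3)}}}\big[e^{\alpha^{(3)}}P(x^{(3)})\big],
\]
the partial derivatives ultimately acting on the surviving factor $e^{\alpha}P(x)$. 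Since the multiplication operators in distinct slots commute, the exponent splits into four commuting pieces (the $\alpha^{(1)}$, $\alpha^{(2)}$, $x^{(1)}$, $x^{(2)}$ parts), and evaluating $r$ on the first two slots factors, by the bicharacter identities of Definition \ref{defn:bich}, into a product over all pairs of generators, one generator taken from slot $1$ and one from slot $2$.

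Second, I would treat the three types of pairs separately. The primitive-primitive pairs $(x_m,x_n)$ reproduce verbatim the computation of Lemma \ref{lem:eQis EQ}: expanding the exponentials as power series and using \eqref{eq:Value_r_on_variables} gives the factor $e^{Q_p}$ with $Q_p=\sum_{m,n}q_{mn}\partial_{x_m}\partial_{x_n}$. The grouplike-grouplike pairs $(e^{\alpha_i},e^{\alpha_j})$ contribute $\prod_{i,j}r(e^{\alpha_i}\otimes e^{\alpha_j})^{m_im_j}=\prod_{i,j}(e^{a_{ij}})^{m_im_j}=e^{\sum_{i,j}a_{ij}m_im_j}$; since $Q_0=\sum_{i,j}a_{ij}\partial_{\alpha_i}\partial_{\alpha_j}$ has $e^{\alpha}$ as an eigenvector with eigenvalue $\sum_{i,j}a_{ij}m_im_j$, this factor is exactly $e^{Q_0}$. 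Finally the cross pairs $(e^{\alpha_i},x_m)$ and $(x_m,e^{\alpha_i})$ contribute the shift $e^{\sum_i m_i\sum_m(b_{im}+c_{mi})\partial_{x_m}}$, which is precisely the action of $e^{Q_1}$ on $e^{\alpha}P(x)$, with $Q_1$ the cross term of \eqref{eq:6B}.

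The step I expect to be the main obstacle is the grouplike bookkeeping, because the operators $e^{\alpha_i\partial_{\alpha_i}}$ cannot be interpreted as power series: as emphasized in the body of this appendix, $\partial_{\alpha_i}$ is merely diagonalizable, and $e^{\alpha_i\partial_{\alpha_i}}$ acts on the eigenspace of eigenvalue $m_i$ as multiplication by $(e^{\alpha_i})^{m_i}$. I would therefore carry out the evaluation of $r$ on the grouplike slots eigenspace by eigenspace, using $r((e^{\alpha_i})^{m_i}\otimes(e^{\alpha_j})^{n_j})=r(e^{\alpha_i}\otimes e^{\alpha_j})^{m_in_j}$, which is where the hypothesis that $r(e^{\alpha_i}\otimes e^{\alpha_j})=e^{a_{ij}}$ is a genuine constant enters: it guarantees that the logarithm $a_{ij}$ lands in $\mathbb{C}$ and that $e^{Q_0}$ is well defined, exactly as explained in Remark \ref{remark:reasonForLog}. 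Once the three commuting factors $e^{Q_0}$, $e^{Q_1}$, $e^{Q_p}$ are assembled, I would invoke their commutativity (established when $\mathbf{Q}$ was split as $Q_0+Q_1+Q_p$) to combine them into $e^{Q_0+Q_1+Q_p}=e^{\mathbf{Q}}$, completing the identification $\EQ_r=e^{\mathbf{Q}}$.
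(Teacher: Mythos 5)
Your proposal is correct and is essentially the paper's own proof of this lemma as given in Appendix \ref{sec:alternativeproof}: you apply $r$ to the first two slots of the exponential-operator form of $\Delta^{2}$, factor the evaluation over pairs of generators using the bicharacter identities (the paper phrases this as $e^{\alpha_i\partial_{\alpha_i}}$ and $e^{x_m\partial_{x_m}}$ behaving like grouplike elements), verify the four identities such as $r(e^{\alpha_i\partial_{\alpha_i}}\ten e^{\alpha_j\partial_{\alpha_j}})=e^{a_{ij}\partial_{\alpha_i}\partial_{\alpha_j}}$ eigenspace by eigenspace exactly as the paper does, and assemble the commuting factors into $e^{\mathbf{Q}}$. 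Your observation about where the constancy hypothesis $r(e^{\alpha_i}\ten e^{\alpha_j})=e^{a_{ij}}\in\mathbb{C}$ enters likewise matches Remark \ref{remark:reasonForLog}.
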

\begin{proof}
Just as in the proof of Lemma \ref{lem:eQis EQ} we use the operator description
of the coproduct by exponential operators. So we have for $a\in V$
\[
\EQ(a)
=r(e^{\sum_i\alpha_{i}\frac{\partial}{\partial{\alpha_{i}}}+\sum_{m}x_{m}
\frac{\partial}{\partial  x_{m} }}\ten
e^{\sum_j\alpha_{j}\frac{\partial}{\partial{\alpha_{j}}}+\sum_{n}x_{n}
\frac{\partial}{\partial  x_{n} }}
)(a),
\]
Now the basic point
is that $e^{\alpha_{i}\partial_{\alpha_{i}}}$ and
$e^{x_{m}\partial_{x_{m}}}$ behave like grouplike elements in
bicharacters:
we have
\begin{align*}
  r(e^{\alpha_{i}\partial_{\alpha_{i}}}\ten
  ab)&=r(e^{\alpha_{i}\partial_{\alpha_{i}}}\ten a)
  r(e^{\alpha_{i}\partial_{\alpha_{i}}}\ten b),\\
  r(e^{x_{m}\partial_{x_{m}}}\ten ab)&=r(e^{x_{m}\partial_{x_{m}}}\ten
  a)r(e^{x_{m}\partial_{x_{m}}}\ten b) ,
\end{align*}
and similar for the second argument of the bicharacter. This implies
that
\begin{align*}
  \EQ(a)&=\prod_{i,j,n,m} r(e^{\alpha_{i}\partial_{\alpha_{i}}}\ten
  e^{\alpha_{j}\partial_{\alpha_{j}}})
r(e^{\alpha_{i}\partial_{\alpha_{i}}}\ten
  e^{x_{n}\partial_{x_{n}}}) \cdot\\
&\qquad \qquad\cdot
r(e^{x_{m}\partial_{x_{m}}}\ten
  e^{\alpha_{j}\partial_{\alpha_{j}}})
r(e^{x_{m}\partial_{_{m}}}\ten
  e^{x_{n}\partial_{x_{n}}}) (a).
\end{align*}
Now one easily checks, using the values of the bicharacter on
generators, see \eqref{eq:4B}, that
\begin{align*}
  r(e^{\alpha_{i}\partial_{\alpha_{i}}}\ten
e^{\alpha_{j}\partial_{\alpha_{j}}})&=
e^{a_{ij}\partial_{\alpha_{i}}\partial_{\alpha_{j}}},\\
  r(e^{\alpha_{i}\partial_{\alpha_{i}}}\ten
  e^{x_{n}\partial_{x_{n}}}) &=
e^{b_{in}\partial_{\alpha_{i}}\partial_{x_{n}}},\\
r(e^{x_{m}\partial_{x_{m}}}\ten
  e^{\alpha_{j}\partial_{\alpha_{j}}}) &=
e^{c_{mj}\partial_{x_{m}}\partial_{\alpha_{j}}},\\
  r(e^{x_{m}\partial_{x_{m}}}\ten
  e^{x_{n}\partial_{_{n}}}) &=
e^{q_{mn}\partial_{x_{m}}\partial_{x_{n}}}.
\end{align*}
For instance, to check the first equality consider a joint
eigenspace of $\partial_{\alpha_{i}}$ and $\partial_{\alpha_{j}}$. It
consists of elements $e^{\alpha}P(x)$, with
$\alpha=m_{i}\alpha_{i}+m_{j}\alpha_{j}+\sum_{k\ne
  i,j}m_{k}\alpha_{k}$, for fixed $m_{i},m_{j}$. Then
\[
r(e^{\alpha_{i}\partial_{\alpha_{i}}}\ten
e^{\alpha_{j}\partial_{\alpha_{j}}})e^{\alpha}P(x)=r(e^{m_{i}\alpha_{i}}\ten
e^{m_{j}\alpha_{j}})e^{\alpha}P(x)=
e^{m_{i}m_{j}a_{ij}}e^{\alpha}P(x).
\]
On the other hand
\[
e^{a_{ij}\partial_{i}\partial_{j}}e^{\alpha}P(x)=e^{a_{ij}m_{i}m_{j}}e^{\alpha}P(x),
\]
proving the first equality. The other equalities are proved similarly, which proves the lemma.
\end{proof}

\bibliographystyle{alpha}

\begin{thebibliography}{FLM88}

\bibitem[BK04]{MR2172171}
Bojko Bakalov and Victor~G. Kac.
\newblock Twisted modules over lattice vertex algebras.
\newblock In {\em Lie theory and its applications in physics {V}}, pages 3--26.
  World Sci. Publ., River Edge, NJ, 2004.

\bibitem[Bor01]{MR1865087}
Richard~E. Borcherds.
\newblock Quantum vertex algebras.
\newblock In {\em Taniguchi Conference on Mathematics Nara '98}, volume~31 of
  {\em Adv. Stud. Pure Math.}, pages 51--74. Math. Soc. Japan, Tokyo, 2001.

\bibitem[DL96]{MR1393116}
Chongying Dong and James Lepowsky.
\newblock The algebraic structure of relative twisted vertex operators.
\newblock {\em J. Pure Appl. Algebra}, 110(3):259--295, 1996.

\bibitem[Don94]{MR1272580}
Chongying Dong.
\newblock Twisted modules for vertex algebras associated with even lattices.
\newblock {\em J. Algebra}, 165(1):91--112, 1994.

\bibitem[Doy10]{MR2681778}
Benjamin Doyon.
\newblock Twisted modules for vertex operator algebras.
\newblock In {\em Moonshine: the first quarter century and beyond}, volume 372
  of {\em London Math. Soc. Lecture Note Ser.}, pages 144--187. Cambridge Univ.
  Press, Cambridge, 2010.

\bibitem[FLM88]{MR996026}
Igor Frenkel, James Lepowsky, and Arne Meurman.
\newblock {\em Vertex operator algebras and the {M}onster}, volume 134 of {\em
  Pure and Applied Mathematics}.
\newblock Academic Press Inc., Boston, MA, 1988.

\bibitem[FS94]{MR1301840}
M.~Flato and D.~Sternheimer.
\newblock Topological quantum groups, star products and their relations.
\newblock {\em Algebra i Analiz}, 6(3):242--251, 1994.

\bibitem[Kac98]{MR1651389}
Victor Kac.
\newblock {\em Vertex algebras for beginners}, volume~10 of {\em University
  Lecture Series}.
\newblock American Mathematical Society, Providence, RI, second edition, 1998.

\bibitem[Kas95]{MR1321145}
Christian Kassel.
\newblock {\em Quantum groups}, volume 155 of {\em Graduate Texts in
  Mathematics}.
\newblock Springer-Verlag, New York, 1995.

\bibitem[KR87]{MR1021978}
V.~G. Kac and A.~K. Raina.
\newblock {\em Bombay lectures on highest weight representations of
  infinite-dimensional {L}ie algebras}, volume~2 of {\em Advanced Series in
  Mathematical Physics}.
\newblock World Scientific Publishing Co. Inc., Teaneck, NJ, 1987.

\bibitem[Li96]{MR1372724}
Hai-Sheng Li.
\newblock Local systems of twisted vertex operators, vertex operator
  superalgebras and twisted modules.
\newblock In {\em Moonshine, the {M}onster, and related topics ({S}outh
  {H}adley, {MA}, 1994)}, volume 193 of {\em Contemp. Math.}, pages 203--236.
  Amer. Math. Soc., Providence, RI, 1996.

\bibitem[LL04]{MR2023933}
James Lepowsky and Haisheng Li.
\newblock {\em Introduction to vertex operator algebras and their
  representations}, volume 227 of {\em Progress in Mathematics}.
\newblock Birkh\"auser Boston Inc., Boston, MA, 2004.

\bibitem[MM65]{MR0174052}
John~W. Milnor and John~C. Moore.
\newblock On the structure of {H}opf algebras.
\newblock {\em Ann. of Math. (2)}, 81:211--264, 1965.

\bibitem[Roi03]{MR1978341}
Michael Roitman.
\newblock On twisted representations of vertex algebras.
\newblock {\em Adv. Math.}, 176(1):53--88, 2003.

\bibitem[Swe67]{MR0214646}
Moss~E. Sweedler.
\newblock Cocommutative {H}opf algebras with antipode.
\newblock {\em Bull. Amer. Math. Soc.}, 73:126--128, 1967.

\end{thebibliography}

\def\cprime{$'$}

\end{document}